\begin{document}

\title{Unification of Deterministic Higher-Order Patterns}

\titlerunning{Unification of Deterministic Higher-Order Patterns}

\author{Johannes Niederhauser\textsuperscript{(\Letter)}%
\orcidID{0000-0002-8662-6834} \and
Aart Middeldorp\orcidID{0000-0001-7366-8464}}

\authorrunning{J.~Niederhauser and A.~Middeldorp}

\institute{Department of Computer Science,
University of Innsbruck, Innsbruck, Austria
\email{\{johannes.niederhauser,aart.middeldorp\}@uibk.ac.at}}

\maketitle

\begin{abstract}
We present a sound and complete unification procedure for
deterministic higher-order patterns, a class of simply-typed
lambda terms introduced by Yokoyama et al.~which comes
with a deterministic matching problem. Our unification procedure
can be seen as a special case of full higher-order unification
where flex-flex pairs can be solved in a most general way.
Moreover, our method generalizes Libal and Miller's recent
functions-as-constructors higher-order unification (FCU)
by dropping their global restriction on variable arguments,
thereby losing the property that every solvable problem has a
most general unifier. In fact, minimal complete sets of
unifiers of deterministic higher-order patterns may be
infinite, so decidability of the unification problem
remains an open question.
\end{abstract}

\section{Introduction}
\label{sec:introduction}

Given two simply-typed $\lambda$-terms $s$ and $t$, higher-order
unification tries to answer the question whether they can be made
equivalent up to $\beta\eta$-conversion by applying some substitution
$\theta$ to both of them, i.e., $s\theta \LRab{\beta\eta}{*} t\theta$.
The first practical higher-order unification procedure has been
introduced by Huet in 1975 \cite{H75}. Since then, higher-order
unification has evolved into an extensively studied topic which plays
a central role in research areas such as higher-order automated
theorem proving \cite{A+96,SB21,BBTV23}, interactive theorem proving
\cite{NPW02}, higher-order logic programming \cite{M91} and
higher-order rewriting \cite{MN98}. In contrast to first-order
unification, which is decidable and unitary (most general unifiers
exist), higher-order unification is undecidable and infinitary
(infinitely many incomparable unifiers may exist) \cite{D01}. Since
neither of these properties makes higher-order unification a
well-behaved subroutine for tasks such as theorem proving, more
well-behaved subclasses of higher-order unification have been
identified and investigated in the past. One important subclass is
higher-order pattern unification due to Miller \cite{M91} where the
arguments of
free variables are restricted to $\eta$-equivalent representations of
distinct bound variables. Even though this rules out many simply-typed
$\lambda$-terms, the unification procedure is sufficient for many
practical purposes \cite{MN98,M91} while being both decidable and
unitary. Furthermore, higher-order pattern unification has linear
time complexity \cite{Q93}.

\begin{example}[taken from \cite{N93}]
Let $\m{c}$ be a constant and $F$, $G$ be variables. The
higher-order patterns $\lambda x,y.F\:x$ and
$\lambda x,y.\m{c}\:(G\:y\:x)$ are unifiable with
$\theta = \SEt{F \mapsto \lambda z.\m{c}\:(H\:z),
G \mapsto \lambda z_1,z_2.H\:z_2}$ where $H$ is a fresh variable:
\[
(\lambda x,y.F\:x)\theta =_\beta \lambda x,y.\m{c}\:(H\:x) =_\beta
(\lambda x,y.\m{c}\:(G\:y\:x))\theta
\]
In fact, $\theta$ is a most general unifier.
\end{example}

Recently, Libal and Miller have generalized higher-order pattern
unification to
functions-as-constructors higher-order unification (FCU) \cite{LM22}.
In this subclass, arguments of free variables may contain constants.
In contrast to Miller's patterns, it is not enough to demand
distinctness of arguments of a free variable. Instead, no argument
of a free variable may be a higher-order subterm of another argument
of the same free variable.
In addition, it is required that an argument of a free variable
is not a strict higher-order subterm of any argument of any
variable in the unification problem (global restriction)
in order to retain unicity and decidability of the unification problem.

\begin{example}[taken from \cite{LM22}]
Let $\m{fst}$, $\m{snd}$ be constants and $X$, $Y$ be variables. The
simply-typed $\lambda$-terms
$\lambda l_1,l_2.X\:(\m{fst}\:l_1)\:(\m{fst}\:(\m{snd}\:l_1))$ and
$\lambda l_1,l_2.\m{snd}\:(Y\:(\m{fst}\:l_2)\:(\m{fst}\:l_1))$ are
unifiable with
$\theta = \SEt{X \mapsto \lambda z_1,z_2.\m{snd}\:(H\:z_1),
Y \mapsto \lambda z_1,z_2.H\:z_2}$ where $H$ is a fresh variable:
\begin{align*}
(\lambda l_1,l_2.X\:(\m{fst}\:l_1)\:(\m{fst}\:(\m{snd}\:l_1)))\theta
&=_\beta
\m{snd}\:(H\:(\m{fst}\:l_1)) \\ &=_\beta
(\lambda l_1,l_2.\m{snd}\:(Y\:(\m{fst}\:l_2)\:(\m{fst}\:l_1)))\theta
\end{align*}
In fact, $\theta$ is a most general unifier.
\end{example}

Note that unlike higher-order pattern unification, FCU does not give
rise to a corresponding class of simply-typed lambda terms: Due to the
global restriction, the question whether
$\lambda l_1,l_2.X\:(\m{fst}\:l_1)\:(\m{fst}\:(\m{snd}\:l_1))$
is in the scope of FCU depends on the remaining terms in the
unification problem. In our previous example, it worked, but if we
had chosen
$\lambda l_1,l_2.\m{snd}\:(Y\:(\m{fst}\:l_2)\:(\m{snd}\:l_1))$
(which is also fine on its own) as the second term, FCU could not be
used as $\m{snd}\:l_1$ is a higher-order subterm of
$(\m{fst}\:(\m{snd}\:l_1))$. In \cite{LM22}, FCU is compared to
Yokoyama et al.'s deterministic second-order patterns (DSPs), a class of
simply-typed
$\lambda$-terms which have a deterministic matching problem, i.e.,
given a DSP $s$ and a simply-typed $\lambda$-term $t$, there is at
most one substitution $\theta$ such that $s\theta = t$ \cite{YHT04b}.
Even though it is not discussed in \cite{LM22},
Yokoyama et al.~also introduced deterministic higher-order
patterns (DHPs) \cite{YHT04} which is a generalization of
DSPs to terms of any finite order. DHPs are very
similar to the terms used in FCU, but since one is only interested in
matching, the global restriction of FCU is not needed. Furthermore,
the set of DHPs strictly includes the set of terms which can in
principle be used in FCU. Since there is no global restriction
as in FCU, unification of DHPs is not unitary.

\begin{example}[taken from \cite{YHT04}]
\label{exa:dhp}
Let $\m{f}$ be a constant and $M$, $N$ be variables. The DHPs
$\lambda x,y.M\:(\m{f}\:x)\:(\m{f}\:y)$ and
$\lambda x,y.\m{f}\:(N\:y\:x)$ admit the three unifiers
\begin{enumerate}[(i)]
\item
$\SEt{M \mapsto \lambda z_1,z_2.z_1, N \mapsto \lambda z_1,z_2.z_2}$,
\item
$\SEt{M \mapsto \lambda z_1,z_2.z_2, N \mapsto \lambda z_1,z_2.z_1}$, and
\item
$\SET{M \mapsto \lambda z_1,z_2.\m{f}\:(Z\:z_1\:z_2),
N \mapsto \lambda z_1,z_2.Z\:(\m{f}\:z_2)\:(\m{f}\:z_1)}$,
\end{enumerate}
which constitute a complete set of unifiers. Here, $Z$ is a fresh
variable.
\end{example}

We think that it is worthwhile to investigate unification of DHPs
because it occupies a special position where matching is still
deterministic but most general unifiers do not exist
anymore. Interestingly, instead of moving from unitary to finitary
unification problems, we immediately jump to infinitary problems as we
will show in \exaref{infinite}. Another reason for the usage
of DHP unification is that the definition of DHPs
as a class of simply-typed $\lambda$-terms makes it easier to use it as a
subroutine than FCU since one cannot get in a situation where the
problem is undefined due to the global restriction.

The remainder of the paper is structured as follows: After
setting the scene in \secref{preliminaries}, we define
DHPs in \secref{dhp} and introduce an inference system to solve
their unification problem in \secref{unification}.
Correctness of the inference system is proved in \secref{correct}.
A sufficient criterion to get minimal solutions is
described in \secref{minimal} before we conclude in \secref{conclusion}.

\section{Preliminaries}
\label{sec:preliminaries}

Throughout this text, we will denote a sequence
$\seq{a}$ by $\vec{a}$ where $n \geq 0$ and the corresponding set by
$\SEt{\vec{a}}$. For every $a$, $\vec[0]{a}$ represents the empty
sequence which we denote by $()$. We implicitly enclose sequences in
parentheses whenever it is necessary. For a binary relation $R$,
$\vec{a} \mathrel{R} \vec{b}$ abbreviates
$a_1 \mathrel{R} b_1, \dots, a_n \mathrel{R} b_n$. Similarly,
$f(\vec{a})$ denotes the pointwise application of the function $f$ to
$\vec{a}$. Given two sequences $\vec{a}$ and $\vec[m]{b}$, their
concatenation is written as $\vec{a},\vec[m]{b}$. We write
$R^*$ for the transitive-reflexive closure of a binary relation $R$.

In this paper, we consider simply-typed $\lambda$-terms
\cite{C40,B92}. Let $\xS$ be a set of \emph{sorts} ($a$, $b$). We use a
flattened representation of simple types, so the set $\xT$ of
\emph{types} ($\sigma$, $\tau$, $\varphi$) is defined as follows:
$\xS \subseteq \xT$ and if $\seq{\sigma} \in \xT$ and $a \in \xS$ then
$\vec{\sigma} \to a \in \xT$ where we identify $() \to a$ with
$a$. The \emph{order} of a type is defined as follows:
$\ord{\iota} = 1$ and
$\ord{\vec{\sigma} \to a} = \m{max}(\ord{\vec{\sigma}}) + 1$.

Since our unification procedure is defined for terms in
$\beta\eta$-long normal form, we will use a $\beta\eta$-free
formulation of terms and substitutions based on \cite{F23}. This way,
we can specify substitution application without referring to
$\beta$-reduction which simplifies the technical development
in the proofs of this paper when compared to similar works
such as \cite{SG89}. Suppose there is an infinite set $\xV$ of typed
variables ($x$, $y$, $z$, $w$, $F$, $G$, $H$, $M$, $N$) and a set
$\xF$ of typed function symbols ($c$, $d$, $f$, $g$, $h$)
such that $\xV$, $\xF$, $\xS$ and $\SET{\to}$ are
disjoint and there are infinitely many variables of each type.
Uppercase variable names typically represent free variables in examples,
but lowercase variable names may also be used for free variables. A
\emph{head} $h$ is either a function symbol or a variable and we write
$h : \sigma$ to denote that it has type $\sigma$.
In order to keep the notation concise, we drop the symbol $\lambda$
in abstractions. The following
inference rule defines the set $\trm{\sigma}$ of \emph{terms} ($s$,
$t$, $u$, $v$, $p$, $q$, $r$) of type $\sigma$
\begin{gather*}
\frac{h : \vec[m]{\tau} \to a \in \xF \cup \xV \quad
\vec[m]{t} \in \trm{\vec[m]{\tau}} \quad
\vec{x} : \vec{\sigma} \in \xV^n}
{\vec{x}.h(\vec[m]{t}) \in \trm{a}}
\end{gather*}
where $\trm{\vec[m]{\tau}}$ is a shorthand for
$\trm{\tau_1} \times \cdots \times \trm{\tau_m}$.
We abbreviate
$h()$ by $h$, $().s$ by $s$ and follow the convention that arbitrary
postfix operations $\diamondsuit$ bind stronger than binders in terms, so
$\vec{x}.t\diamondsuit = \vec{x}.(t\diamondsuit)$. Moreover,
$\vec[m]{\vec{x}.s\diamondsuit}$ denotes the list of terms
$\vec{x}.s_1\diamondsuit,\dots,\vec{x}.s_m\diamondsuit$. Furthermore,
we write $\hd{s}$ for the head of a term $s$. The set $\trm{\xS}$
denotes the set of all terms which coincides with the set of
well-typed lambda terms over $\xF$ and $\xV$ using sorts from $\xS$ in
$\beta\eta$-long normal form. As for function symbols and variables,
we denote that a term $s$ has type $\sigma$ by $s : \sigma$. Given a
term $s$, its set of free variables $\fv{s}$ is defined as
usual. Given some structure $A$ containing terms, we often write
$\fv{A}$ for the union of the sets of free variables of all terms in
the structure. A term $t$ is a \emph{subterm} of $s$ if $s \subteq t$
which is defined as follows: $s = \vec{x}.h(\vec[m]{s}) \subteq t$ if
$s = t$ or $\vec{x}.s_i \subteq t$ for some $1 \leq i \leq m$. We
define $s \subt t$ as $s \subteq t$ and $s \neq t$. The size $|s|$ of
a term is defined inductively as
$|\vec[k]{x}.h(\vec{t})| = 1 + \sum_{1 \leq i \leq n} t_i$.

We view terms modulo renaming of bound variables
($\alpha$-renaming). Hence, we may assume that no variable occurs both
free and bound in any term and that all bound variables are named
differently. Note that $\xV \subseteq \trm{\xS}$ does not hold as
terms are always in canonical form. The function $\cf$ takes a
variable and returns its canonical form by performing
$\eta$-expansion: $x\cf = \vec{y}.x(\vec{y}\cf)$ whenever
$x : \vec{\sigma} \to a$, $\vec{y} : \vec{\sigma}$ and
$x \notin \SEt{\vec{y}}$.

Finite mappings from variables to terms of the same type are called
\emph{substitutions} ($\theta, \gamma$, $\delta$, $\mu$, $\nu$). Given
a substitution $\theta = \SEt{\vec{x} \mapsto \vec{t}}$, its
\emph{domain} and \emph{image} are defined as
$\dom{\theta} = \SEt{\vec{x}}$ and $\im{\theta} = \SEt{\vec{t}}$.
The \emph{free variables} introduced by a substitution
$\theta$ are defined as the set
$\fv{\theta} = \bigcup\,\SET{\fv{t} \mid t \in \im{\theta}}$.
A set of variables $X$ is \emph{fresh} for a substitution $\theta$ if
$(\dom{\theta} \cup \im{\theta}) \cap X = \varnothing$. A
substitution $\theta$ is \emph{idempotent} if
$\dom{\theta} \cap \fv{\theta} = \varnothing$. Since terms are always
in their canonical form, the application of a substitution $\theta$ to
a term $t$ (written in postfix notation $t\theta$) is defined
hereditarily by implicitly performing $\beta$-reduction \cite{HL07}:
$x(\vec{t})\theta = u\SET{\vec{x} \mapsto \vec{t}\theta}$ if
$x \mapsto \vec{x}.u \in \theta$,
$h(\vec{t})\theta = h(\vec{t}\theta)$ if $h \notin \dom{\theta}$ and
$(\vec{x}.t)\theta = \vec{y}.t\SEt{\vec{x} \mapsto \vec{y}}\theta$
where $\SEt{\vec{y}}$ is fresh for $\theta$.
Hence, $t\theta$ represents the capture-avoiding application of
$\theta$ to the free variables of $t$. Note that substitution application
$s\theta$ is well-defined by induction on $(\ord{\theta},s)$ with respect
to $({>_\mathbb{N}},{\subt})$. We say that a term $s$
\emph{matches} a term $t$ if there exists a substitution $\theta$ such
that $s\theta = t$. Moreover, if $s\theta = t\theta$ then $\theta$ is
a \emph{unifier} of $s$ and $t$. Given
$\theta = \SEt{\vec{x} \mapsto \vec{s}}$ and
$\delta = \SEt{\vec[m]{y} \mapsto \vec[m]{t}}$, we define their
composition as
$\theta\delta = \SET{\vec{x} \mapsto \vec{s}\delta} \cup
\SET{y_j \mapsto t_j \mid \text{$1 \leq j \leq m$ and
$y_j \notin \SEt{\vec{x}}$}}$.

\begin{lemma}
\label{lem:substAppAssoc}
If $\theta$ and $\delta$ are substitutions then
$(s\theta)\delta = s(\theta\delta)$ for all terms $s$.
\end{lemma}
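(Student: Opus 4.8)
The plan is to prove the identity by well-founded induction on the term $s$, following the two inference rules that generate $\trm{\xS}$, and exploiting the $\alpha$-renaming conventions fixed above: since no variable occurs both free and bound and all bound variables are named apart, whenever a binder is crossed its bound variables may be assumed fresh for $\theta$, $\delta$ and $\theta\delta$, which removes all capture-avoidance bookkeeping. The abstraction case $s = \vec{x}.t$ is routine: by the clause for abstractions and freshness of $\vec{x}$ we get $(\vec{x}.t)\theta = \vec{x}.(t\theta)$, and likewise for $\delta$ and $\theta\delta$, so the goal reduces to $(t\theta)\delta = t(\theta\delta)$, which is the hypothesis for the smaller body $t$. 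The application case $s = h(\vec{t})$ splits on the head $h$. If $h$ is a function symbol, or a variable with $h \notin \dom{\theta}$, substitution merely pushes into the arguments; applying the hypothesis to each $t_i$ and unfolding the definition of the composition $\theta\delta$ on such a head closes these subcases.

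The crux is the head $h = x$ with $x \in \dom{\theta}$, say $x \mapsto \vec{x}.u \in \theta$. Here $x(\vec{t})\theta = u\SET{\vec{x} \mapsto \vec{t}\theta}$, so the \emph{body} $u$ of the substitute is exposed, and $u$ is in general not a subterm of $s$; plain structural induction on $s$ therefore gives no handle on it. Using the hypothesis on the arguments, $\vec{t}(\theta\delta) = (\vec{t}\theta)\delta$, and computing $\theta\delta$ on $x$ (which maps $x$ to $(\vec{x}.u)\delta = \vec{x}.(u\delta)$ by freshness of $\vec{x}$), both sides reduce to an instance of the \emph{substitution lemma}
\[
\bigl(u\SET{\vec{x} \mapsto \vec{w}}\bigr)\delta = (u\delta)\,\SET{\vec{x} \mapsto \vec{w}\delta},
\]
valid whenever $\vec{x}$ is fresh for $\delta$ (both for $\dom{\delta}$ and for $\fv{\delta}$). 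The freshness is exactly what aligns the simultaneous composition $\SET{\vec{x} \mapsto \vec{w}\delta}$ that $\theta\delta$ produces with the sequential application on the left, and I would isolate this alignment as a small preliminary fact about composition. Note that the substitution lemma is itself a special case of the statement being proved (take $\theta = \SET{\vec{x} \mapsto \vec{w}}$), so it cannot simply be cited as a black box.

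I expect the choice of induction measure to be the one genuinely delicate point. The phenomenon is that eliminating a variable head $x \in \dom{\theta}$ replaces $x$ by substitutes for the binders $\vec{x}$, whose types are the strictly smaller argument types of $x$; the recursive obligations are thus governed not by term size but by the \emph{order of the type} of the eliminated variable, which is precisely the quantity that makes hereditary substitution terminate. Accordingly I would not induct on $|s|$ alone but on a well-founded measure built from the orders of the types of the variables being eliminated, with term structure as a secondary component, so that crossing a binder or descending into an argument decreases the structural part while unfolding a variable head decreases the type part. With such a measure the substitution lemma and the main identity are obtained by one well-founded induction, the critical variable-head case appealing to the hypothesis at a strictly lower order; all remaining steps are unfolding of the definitions of substitution application and composition together with the $\alpha$-renaming conventions.
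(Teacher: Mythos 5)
Your proposal is correct and takes essentially the same route as the paper: the paper's ``induction on the definition of $s\theta$'' is exactly the well-founded induction on the hereditary-substitution recursion that you make explicit via your type-order/term-structure measure, and its handling of the critical case where the head of $s$ lies in the domain of $\theta$ --- applying the induction hypothesis to the body of the substitute under the substitution formed from its binders --- is precisely your reduction to the substitution lemma at strictly smaller type order. The only difference is packaging: the paper inducts on the recursion tree of the already-defined hereditary application, whereas you reconstruct the termination measure that justifies that recursion.
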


\begin{proof}
We proceed by induction on the definition of substitution application in
$s\theta$. Let
$s = \vec{x}.h(\vec[m]{s})$. Without loss of generality, we may assume
that $\SEt{\vec{x}}$ is fresh for $\theta$ and $\delta$.
Hence, $(s\theta)\delta = \vec{x}.(h(\vec[m]{s})\theta)\delta$. If
$h \notin \dom{\theta}$ then
$(s\theta)\delta = \vec{x}.h(\vec[m]{s}\theta)\delta$. If also
$h \notin \dom{\delta}$ then
$(s\theta)\delta = \vec{x}.h((\vec[m]{s}\theta)\delta) =
\vec{x}.h(\vec[m]{s}(\theta\delta)) = s(\theta\delta)$ where we use the
induction hypothesis $m$ times for the second equality. If
$(h \mapsto \vec[m]{y}.u) \in \delta$ then
$(s\theta)\delta = \vec{x}.u\SET{\vec[m]{y} \mapsto
(\vec[m]{s}\theta)\delta} = \vec{x}.u\SET{\vec[m]{y} \mapsto
\vec[m]{s}(\theta\delta)} = s(\theta\delta)$ where we use the
induction hypothesis $m$ times for the second equality. In the remaining
case, $(h \mapsto \vec[m]{y}.t) \in \theta$. Without loss of generality,
$\SEt{\vec[m]{y}}$ is fresh for $\delta$, so
\begin{align*}
(s\theta)\delta
&= \vec{x}.(t\SET{\vec[m]{y} \mapsto \vec[m]{s}\theta})\delta
= \vec{x}.t(\SET{\vec[m]{y} \mapsto \vec[m]{s}\theta}\delta) \\
&= \vec{x}.t\delta\SET{\vec[m]{y} \mapsto (\vec[m]{s}\theta)\delta}
= \vec{x}.t\delta\SET{\vec[m]{y} \mapsto \vec[m]{s}(\theta\delta)}
= s(\theta\delta)
\end{align*}
where the second and fourth equalities follow from the induction
hypothesis. \qed
\end{proof}

\lemref{substAppAssoc} allows us to write $s\theta\delta$ without
parentheses. Moreover, it implies associativity of substitution
composition. Given a set of variables $W$, we say that two
substitutions $\theta_1$ and $\theta_2$ are \emph{equal over W}
($\theta_1 = \theta_2 \restr{W}$) if $\theta_1(x) = \theta_2(x)$ for
all $x \in W$. A substitution $\theta$ is \emph{at least as general}
as a substitution $\delta$ with respect to $W$, denoted by
$\theta \subsumes \delta\restr{W}$, if there exists a substitution
$\gamma$ such that $\theta\gamma = \delta\restr{W}$. Associativity of
substitution composition implies transitivity of $\subsumes$. We drop
the suffix $[W]$ if $W$ is the set of all variables. Finally, we say
that two substitutions $\theta$ and $\delta$ are \emph{orthogonal}
($\theta \perp \delta$) if there is no substitution $\gamma$ such that
$\theta \subsumes \gamma$ and $\delta \subsumes \gamma$.

\section{Deterministic Higher-Order Patterns}
\label{sec:dhp}

The goal of DHPs is to approximate the class of terms which
have a deterministic matching problem while keeping the definition
simple and the corresponding matching procedure efficient. In the
following examples, we leave out type information as it is not
important.

\begin{example}
Consider the term $s = F(G)$. Given a term $t$ there cannot be at
most one substitution $\theta_t$ such that $s\theta_t = t$ as it is
not clear how to split the substitution between $F$ and $G$. For
example, if $t = \m{f}(x)$ then we have $s\theta_1 = t$ and
$s\theta_2 = t$ for
$\theta_1 = \SET{F \mapsto z.\m{f}(x), G \mapsto y}$ and
$\theta_2 = \SET{F \mapsto z.\m{f}(z), G \mapsto x}$. In fact, it is
easy to see that there are actually infinitely many possible
matching substitutions.
\end{example}

Hence, nested free variables will not be allowed in DHPs. On the other
hand, arguments of free variables have to contain bound variables as
the following example shows.

\begin{example}
Consider the term $s = F(\m{c})$. Again, matching is not
deterministic: For $t = \m{f}(\m{c})$ we have $s\theta_1 = t$ and
$s\theta_2 = t$ where $\theta_1 = \SET{F \mapsto z.\m{f}(\m{c})}$
and $\theta_2 = \SET{F \mapsto z.\m{f}(z)}$.
\end{example}

If the arguments of free variables contain at least one bound
variable, then, due to capture-avoidance, a matching substitution is
forced to use one of these arguments in order to construct terms with
bound variables. Furthermore, bound variables can only be used in the
contexts specified by the arguments of the free variable.

\begin{example}
Consider the term $s = x,y.F(\m{f}(x))$. For
$t = x,y.\m{g}(\m{f}(x))$ there is a unique substitution
$\theta = \SET{F \mapsto z.\m{g}(z)}$ such that $s\theta =
t$. Note that $s$ does not match $x,y.\m{g}(x)$ because the
bound variable $x$ does not occur in the context specified by the
argument of $F$ in $s$. Moreover, $s$ does not match $x,y.y$
because the bound variable $y$ does not occur in the arguments of
$F$.
\end{example}

Finally, if a free variable has multiple arguments, we have to ensure
that they are different enough such that there is always a unique
argument to choose while constructing a term during matching.

\begin{example}
Consider the term $s = x.F(\m{f}(x,\m{c}),z.\m{f}(x,z))$.
For $t = x.\m{f}(x,\m{c})$ we have $s\theta_1 = t$ and
$s\theta_2 = t$ where $\theta_1 = \SEt{F \mapsto z_1,z_2.z_1}$
and $\theta_2 = \SET{F \mapsto z_1,z_2.z_2(\m{c})}$.
\end{example}

Hence, DHPs as introduced in \cite{YHT04} are defined via
$\beta\eta$-normal forms and it is demanded that no argument of a free
variable is a higher-order subterm of another argument of the same
free variable. In the following, we present a definition of DHPs for
our definition of terms, i.e., $\lambda$-terms in $\beta\eta$-\emph{long}
normal from. To that end, we introduce the notion of expanded terms as
well as the expanded subterm relation which replaces the higher-order
subterm relation on $\beta\eta$-normal forms.
Intuitively, if $\vec{x}.s$ is expanded, then the $\eta$-normal from
of $s$ is not an abstraction.

\begin{definition}
A term $\vec{x}.s$ is \emph{expanded} if
$\vec{x}.s = \vec{x},\vec[k]{y}.h(\vec[m]{s},\vec[k]{y}\cf)$ such that
$\bigl(\fv{\vec[m]{s}} \cup \SET{h}\bigr) \cap \SEt{\vec[k]{y}} =
\varnothing$. Consider a term $\vec{x}.s$ as well as an expanded
term $\vec{x}.t = \vec{x},\vec[k]{y}.h(\vec[m]{t},\vec[k]{y}\cf)$.
We say that $\vec[n]{x}.t$ is an \emph{expanded subterm} of $\vec{x}.s$
$(\vec{x}.s \subtEeq \vec{x}.t)$ if there exist $n' \geq n$ and $k$
terms written as
$\vec[n']{x}.t_{m+1},\dots,\vec[n']{x}.t_{m+k}$ such that
$\vec{x}.s \subteq \vec[n']{x}.h(\vec[m+k]{t})$. We write
$\vec{x}.s \subtE \vec{x}.t$ if $\vec{x}.s \subtEeq \vec{x}.t$ and
$\vec{x}.s \neq \vec{x}.t$.
\end{definition}

\begin{example}
Consider the sort $\m{a}$ as well as the function symbols $\m{c} : \m{a}$,
$\m{g} : (\m{a},\m{a}) \to \m{a}$ and
$\m{f} : (\m{a},\m{a},\m{a} \to \m{a}) \to \m{a}$.
The terms $s_1 = \m{c}$, $s_2 = \m{g}(x,y)$, $s_3 = x.\m{g}(\m{c},x)$ and
$x.s_4 = x,y.\m{f}(x,\m{c},z.y(z))$ are expanded
but $s_5 = x.\m{g}(\m{c},y)$ and $x.s_6 = x,y.\m{f}(z.y(z),\m{c},z.y(z))$
are not.
Consider $x.u = x.\m{f}(x,\m{c},y.\m{g}(x,y))$. We have
$x.u \subtEeq x.v_i$ for $1 \leq i \leq 3$ where
$x.v_1 = x,z_1,z_2,z_3.\m{f}(z_1,z_2,z_4.z_3(z_4))$,
$x.v_2 = x,z.\m{f}(x,\m{c},y.z(y))$ and
$x.v_3 = x,z.\m{g}(x,z)$.
On the other hand, $x.u \nsubtEeq x.v_4$ where
$x.v_4 = x,y.\m{f}(\m{c},x,z.y(z))$.
\end{example}

In general, if $\vec{x}.t$ is an expanded term and
$s \subteq \vec{x}.t$ then also $s \subtEeq \vec{x}.t$.
Note that in the definition of $\subtEeq$ we do
not have to demand that $x_{n+1},\dots,x_{n'}$ are fresh
as these variables occur as bound variables in $\vec{x}.s$
and we follow the convention that no variable occurs both
free and bound in a term. Furthermore,
$\vec{x}.s \subt \vec{x}.t$ and
$\vec{x}.t \subtEeq \vec{x}.u$ implies $\vec{x}.s \subtE \vec{x}.u$.

\begin{definition}[adapted from \cite{YHT04}]
\label{def:dhp}
A term $s$ is a \emph{deterministic higher-order pattern} \textup{(DHP)} if
the following holds for all subterms $\vec{x}.y(\vec[m]{t})$ with
$y \notin \SEt{\vec{x}}$ and $1 \leq i \leq m$:
\begin{enumerate}[(i)]
\item \label{enu:dhp-var}
$\varnothing \neq \fv{t_i} \subseteq \SEt{\vec{x}}$,
\item \label{enu:dhp-exp}
$\vec{x}.t_i$ is an expanded term,
\item \label{enu:dhp-subt}
$\vec{x}.t_i \nsubtEeq \vec{x}.t_j$ whenever $i \neq j$.
\end{enumerate}
In this case, we say that $\vec[m]{\vec{x}.s}$ is a
\emph{DHP var-arg list}. Furthermore, we define
$\dhp \subseteq \trm{\xS}$ as the set of all \textup{DHPs}.
\end{definition}

\begin{example}
Consider the sort $\m{a}$, the function symbols $\m{f} : \m{a} \to \m{a}$,
$\m{c} : \m{a}$ as well as the variables $F : (\m{a},\m{a}) \to \m{a}$ and
$G : (\m{a},\m{a} \to \m{a}) \to \m{a}$.
The terms $x,y.F(y,x)$, $x,y.F(\m{f}(x),\m{f}(y))$ and
$x,y.F(x(y),x(\m{c}))$ are DHPs. Note that only the first term also
a pattern in the sense of Miller. On the other hand,
none of the terms $x,y.F(\m{c},x)$, $x,y.G(y,z.x(z,\m{c}))$,
$x,y.F(x,x)$, $x,y.F(\m{f}(x),x)$, and $x,y.G(x(y),z.x(z))$ is a
DHP because the first one violates condition \enuref{dhp-var}, the
second one violates condition \enuref{dhp-exp}
and the remaining terms violate condition \enuref{dhp-subt}.
\end{example}

We now show that DHPs have unique matching substitutions.

\begin{lemma}
\label{lem:matchUnique}
Let $s = \vec{x}.F(\vec[m]{s}) \in \dhp$ and $t$ be an arbitrary term. If
$s\theta = t$ where $F \in \dom{\theta}$ then $\theta(F)$ is unique.
\end{lemma}

\begin{proof}
Assume $s\theta = t$ and $\theta(F) = \vec[m]{y}.u$.
Without loss of generality, $\SEt{\vec{x}}$ is fresh for $\theta$.
Since $s \in \dhp$, $\vec{x}.s_i\theta = \vec{x}.s_i$ for
$1 \leq i \leq m$ by \defref{dhp}\enuref{dhp-var}.
Hence, $t = \vec{x}.u\mu$ where
$\mu = \SEt{\vec[m]{y} \mapsto \vec[m]{s}}$. We prove that
all subterms of $\vec[m]{y}.u$ are unique by induction on $\vec[m]{y}.u$.
Consider $\vec[m]{y}.u \subteq \vec[m]{y}.u' =
\vec[m]{y},\vec[k]{z}.h_1(\vec[l_1]{u})$, so
by $t = \vec{x}.u\mu$ there exists a subterm
$t \subteq \vec{x}.t' = \vec{x},\vec[k]{z}.h_2(\vec[l_2]{t})$
such that $\vec{x}.t' = \vec{x}.u'\mu$ (\textasteriskcentered).
We proceed by a case analysis on whether there exists some
$1 \leq j \leq m$ such that
$\vec{x}.s_j = \vec{x},\vec[l_4]{w}.h_2(\vec[l_3]{t},\vec[l_4]{w}\cf)$
where $l_2 = l_3 + l_4$.
\begin{itemize}
\item
If there is such a $j$, it is unique by \defref{dhp}\enuref{dhp-subt} and
from \defref{dhp}\enuref{dhp-exp} we know that
$(\fv{\vec[l_3]{t}} \cup \SEt{h_2}) \cap \SEt{\vec[l_4]{w}} = \varnothing$.
Furthermore, from \defref{dhp}\enuref{dhp-var} we obtain
$\varnothing \neq \fv{s_j} \subseteq \SEt{\vec{x}}$. Together with
(\textasteriskcentered), this forces $h_1 = y_j$ and therefore
$l_1 = l_4$.
Moreover,
$\vec{x},\vec[k]{z}.u_i\mu = \vec{x},\vec[k]{z}.t_{l_3+i}$ for
$1 \leq i \leq l_1$. By the induction hypothesis, these subterms are
unique, and so is $h = y_j$.
\item 
Otherwise, $h_1 = h_2$ and the claim follows immediately from the
induction hypothesis. \qed
\end{itemize}
\end{proof}

\begin{theorem}
\label{thm:matchUnique}
Let $s \in \dhp$ and $t$ be an arbitrary term.
If $s\theta = t$ then $\theta|_{\fv{s}}$ is unique.
\end{theorem}

\begin{proof}
Assume $s\theta = t$ and let $s = \vec{x}.h(\vec[m]{s})$. We prove that
$\theta|_{\fv{s}}$ is unique by induction on $s$. Without loss of
generality, $\SEt{\vec{x}}$ is fresh for $\theta$. If
$h \notin \dom{\theta}$ then
$s\theta = \vec{x}.h(\vec[m]{s}\theta)$ and therefore
$t = \vec{x}.h(\vec[m]{t})$ where $\vec{x}.s_i\theta = \vec{x}.t_i$ for
$1 \leq i \leq m$. By the induction hypothesis, the substitutions
$\theta|_{\fv{\vec{x}.s_i}}$ for $1 \leq i \leq m$ are unique, so
also $\theta|_{\fv{s}}$ is unique.
If $h \in \dom{\theta}$ we conclude by \lemref{matchUnique}. \qed
\end{proof}

As we can see in condition \enuref{dhp-var} of \defref{dhp}, it is
sufficient for deterministic matching
to require that each argument of a free variable contains
at least one bound variable. For the terms which can in principle be
used in FCU, it is additionally demanded that all leaves of an
argument term of a free variable are bound variables. In particular,
$z.F(\m{g}(\m{c},z))$ is a DHP but cannot be used in FCU as $\m{c}$ is not
a bound variable. Condition \enuref{dhp-exp} is also used in FCU. For
DHPs, the main motivation for this condition is that it facilitates a
more straightforward matching algorithm. Moreover, we use it in the
soundness and completeness proofs of our unification procedure.
Finally, condition \enuref{dhp-subt}, sometimes also referred to as
the \emph{local restriction}, is augmented in FCU by the \emph{global
restriction} which is the extension of \enuref{dhp-subt} using
$\nsubtE$ instead of $\nsubtEeq$ to all combinations of free variable
arguments in the unification problem.

\section{Unification}
\label{sec:unification}

We already showed in \exaref{dhp} that unification of DHPs is not
unitary. Therefore, instead of using most general unifiers as in
higher-order pattern unification and FCU, we have to work with
complete sets of unifiers as in full higher-order unification
\cite{SG89}. However, unlike full higher-order unification, we will
present a unification procedure which can produce complete sets of
unifiers which are minimal in the sense that the produced
substitutions are orthogonal.

\begin{definition}
\label{def:csu}
Given a set $E$ of unordered pairs of DHPs, we denote its set of
unifiers by $\unifiers{E}$. A \emph{complete set of unifiers} $U$ of
$E$, which we abbreviate by $\csu{E}$, satisfies the following
conditions:
\begin{enumerate}[(i)]
\item \label{enu:csuDom}
$\dom{\theta} \subseteq \fv{E}$ for all $\theta \in U$,
\item \label{enu:csuUnif}
$U \subseteq \unifiers{E}$,
\item \label{enu:csuComplete}
for every $\delta \in \unifiers{E}$ there exists some
$\theta \in U$ such that $\theta \subsumes \delta\restr{\fv{E}}$,
\end{enumerate}
A \emph{minimal complete set of unifiers} $U$ of $E$
\textup{($\mcsu{E}$)} is a $\csu{E}$ with the additional property that
$\theta_1, \theta_2 \in U$ and $\theta_1 \neq \theta_2$ implies
$\theta_1 \perp \theta_2$.
\end{definition}

The following example shows that minimal complete sets of unifiers
of DHPs can be infinite.

\begin{example}
\label{exa:infinite}
Consider the sort $\m{a}$, the function symbol
$\m{f} : \m{a} \to \m{a}$ as well as the variable
$M : \m{a} \to \m{a}$. The terms $x.M(\m{f}(x))$ and $x.\m{f}(M(x))$
admit the infinite minimal complete set of unifiers
$\SET{\theta_k \mid k \leq 0}$ where
$\theta_k = \SET{M \mapsto x.\m{f}^k(x)}$.
\end{example}

We proceed with the formal definition of unification of DHPs. As in
full higher-order unification, an important ingredient is the notion
of partial bindings. Due to the nature of DHPs, we do not need to
consider partial bindings where the head is a free variable.

\begin{definition}
A \emph{partial binding} of type $\vec{\sigma} \to a$ is a term $u$
of the form
\[
\vec{x}.h(\vec[k_1]{y^1}.H_1(\vec{x}\cf,\vec[k_1]{y^1}\cf),\dots,
\vec[k_m]{y^m}.H_m(\vec{x}\cf,\vec[k_m]{y^m}\cf)) \in \dhp
\]
for some $h \in \xF \cup \SEt{\vec{x}}$ where
$h : \vec[m]{\tau} \to a$, $\vec{x} : \vec{\sigma}$,
$\tau_i = \vec[k_i]{\varphi^i} \to b_i$,
$\vec[k_i]{y^i} : \vec[k_i]{\varphi^i}$ and
$H_i : (\vec{\sigma},\vec[k_i]{\varphi^i}) \to b_i$ for all
$1 \leq i \leq m$. Note that partial bindings are uniquely
determined by their type and the choice of $h$ up to renaming of the
variables $H_1,\dots,H_m$ which we always assume to be fresh. If
$h = x_i$ for some $1 \leq i \leq n$ then the partial binding is
called a \emph{projection binding}, otherwise it is an
\emph{imitation binding}. If $u$ is a(n) projection binding
(imitation binding) of type $\tau$, we write $u \in \pb{\tau}$
\textup{($u \in \ib{\tau}$)}.
\end{definition}

\begin{definition}
\label{def:unif}
The following inference rules enable the nondeterministic computation
of complete sets of unifiers for deterministic higher-order patterns.
To that end, they operate on \emph{unification pairs} $(E,\theta)$
where $E$ is a finite set of unordered pairs of DHPs of the same type
and $\theta$ is a substitution.
\begin{itemize}
\item[\urem] \emph{removal of trivial equations}
\[
\frac{\SET{s \approx s} \uplus E, \theta}{E, \theta}
\]
\item[\udec] \emph{decomposition}
\[
\frac{\SET{\vec[k]{x}.h(\vec{s}) \approx \vec[k]{x}.h(\vec{t})} \uplus E,
\theta}{\bigcup_{1 \leq i \leq n}
\SEt{\vec[k]{x}.s_i \approx \vec[k]{x}.t_i} \cup E, \theta}
\tag*{if $h \in \xF \cup \SEt{\vec[k]{x}}$}
\]
\item[\uvar] \emph{variable elimination}
\[
\frac{\SET{\vec[k]{x}.F(\vec[k]{x}) \approx u} \uplus E, \theta}
{E\SET{F \mapsto u}, \theta\SET{F \mapsto u}}
\tag*{if $F \notin \fv{u}$}
\] 
\item[\uimt] \emph{imitation}
\[
\frac{E = \SET{\vec[k]{x}.F(\vec{s}) \approx \vec[k]{x}.h(\vec[m]{t})}
\uplus E', \theta}{E\SET{F \mapsto u}, \theta\SET{F \mapsto u}}
\tag*{if $F : \tau$ and $u \in \ib{\tau}$}
\]
where $\hd{u} = h \in \xF$
\item[\uprj] \emph{projection}
\[
\frac{E = \SET{\vec[k]{x}.F(\vec{s}) \approx \vec[k]{x}.h(\vec[m]{t})}
\uplus E', \theta}{E\SET{F \mapsto u}, \theta\SET{F \mapsto u}}
\tag*{if $F : \tau$ and $u \in \pb{\tau}$}
\]
where $h = \hd{(\vec[k]{x}.F(\vec{s}))\SET{F \mapsto u}} \in
\xF \cup \SEt{\vec[k]{x}}$
\item[\uffe] \emph{same variables}
\[
\frac{\SET{\vec[k]{x}.F(\vec{s}) \approx \vec[k]{x}.F(\vec{t})} \uplus E,
\theta}{E\SET{F \mapsto u}, \theta\SET{F \mapsto u}}
\tag*{if $u = \vec{y}.H(y_{i_1}\cf,\dots,y_{i_m}\cf)$}
\]
where $H$ is a fresh variable and $\SEt{\seq[m]{i}} =
\SEt{1 \leq i \leq n \mid \vec[k]{x}.s_i = \vec[k]{x}.t_i}$
\item[\uffne] \label{enu:ffneq} \emph{different variables}
\[
\frac{\SET{\vec[k]{x}.F(\vec{s}) \approx \vec[k]{x}.G(\vec[m]{t})}
\uplus E, \theta}{E\SET{F \mapsto u, G \mapsto v},
\theta\SET{F \mapsto u, G \mapsto v}}
\tag*{if $F \neq G$}
\]
where $u = \vec{y}.H(\vec[l]{u})$, $v = \vec[m]{z}.H(\vec[l]{v})$,
$H$ is a fresh variable,
$\SET{(\vec{y}.u_i,\vec[m]{z}.v_i) \mid 1 \leq i \leq l} = P_1 \cup P_2$
for
\begin{align*}
P_1 &= \SET{(\vec{y}.y_i\cf,\theta_i(G_i)) \mid 1 \leq i \leq n,
\vec[k]{x}.s_i = (\vec[k]{x}.G_i(\vec[m]{t}))\theta_i} \\
P_2 &= \SET{(\theta_i(H_i),\vec[m]{z}.z_i\cf) \mid 1 \leq i \leq m,
\vec[k]{x}.t_i = (\vec[k]{x}.H_i(\vec{s}))\theta_i}
\end{align*}
\end{itemize}
We write $(E,\theta) \ustep (E',\theta')$ if there is a corresponding
one-step derivation in the inference system. A
derivation $(E,\varnothing) \ustep^* (\varnothing,\theta)$ is
considered to be \emph{successful}.
\end{definition}

The inference system is defined along the lines of the one given in
\cite{SG89} for full higher-order unification. A notable difference is
the handling of flex-flex pairs (pairs where the heads of both
terms are free variables) in \uffe and \uffne. While partial
bindings have to be used here for full higher-order unification, there
is actually a most general way of resolving them for DHPs: In rule
\uffe we obtain a most general unifier of the pair by only keeping the
arguments which are equal while in rule \uffne we also have to keep
arguments which can be constructed by combining possibly multiple
arguments without using free variables. Hence, we can restrict \udec
to the cases where the head is not a free variable. Furthermore,
partial bindings do not need to consider free variables as
heads. Moreover, in \uprj, the resulting head symbol has to be equal
because in DHPs the arguments of free variables cannot be projections.
Finally, the inference
system is finitely branching as one can only apply an inference rule
to one of the finitely many unordered pairs, and each of them has a
deterministic (up to renaming of fresh variables) outcome except for
projections of which there are always only finitely many. The
following three examples showcase our inference system for
unification of DHPs. All of them
are out of scope for both Miller's pattern unification and FCU.

\begin{example}
Consider the sort $\m{a}$, function symbols
$\m{f} : \m{a} \to \m{a}$ and $\m{g} : (\m{a},\m{a}) \to \m{a}$ as
well as the variables $F : (\m{a},\m{a}) \to \m{a}$ and
$G : (\m{a} \to \m{a},\m{a},\m{a}) \to \m{a}$. The terms
$x,y.F(x,\m{g}(x,y))$ and $x,y.G(z.\m{g}(x,z),\m{f}(x),y)$ are
unifiable as witnessed by the following derivation
\begin{alignat*}{2}
&&~&
(\SET{x,y.F(x,\m{g}(x,y)) \approx x,y.G(z.\m{g}(x,z),\m{f}(x),y)},
\varnothing) \\
&\ustep_\uffne&&
(\SET{x,y.H(\m{g}(x,y),\m{f}(x)) \approx x,y.H(\m{g}(x,y),\m{f}(x))},
\theta) \\
&\ustep_\urem&& (\varnothing,\theta)
\end{alignat*}
where $H$ is a fresh variable of appropriate type and
\[
\theta = \SET{F \mapsto y_1,y_2.H(y_2,\m{f}(y_1)),
G \mapsto z_1,z_2,z_3.H(z_1(z_3),z_2)}
\]
\end{example}

\begin{example}
Consider the sort $\m{a}$, the function symbol
$\m{f} : \m{a} \to \m{a}$ as well as the variables
$M,N : (\m{a},\m{a}) \to \m{a}$. Then,
$(\SET{x,y.M(\m{f}(x),\m{f}(y)) \approx x,y.\m{f}(N(y,x))},\varnothing)$
is a unification pair
representing the unification problem discussed in \exaref{dhp}.
From the initial unification pair, there are three possible steps
with $\ustep$. First, we can use \uprj to project $M$ to its first
argument and successfully complete the derivation to obtain unifier
(i) from \exaref{dhp}:
\begin{alignat*}{2}
&&~&
(\SET{x,y.M(\m{f}(x),\m{f}(y)) \approx x,y.\m{f}(N(y,x))}, \varnothing) \\
&\ustep_\uprj&&
(\SET{x,y.\m{f}(x) \approx x,y.\m{f}(N(y,x))},
\SEt{M \mapsto z_1,z_2.z_1}) \\
&\ustep_\udec&&
(\SET{x,y.x \approx x,y.N(y,x)},\SEt{M \mapsto z_1,z_2.z_1}) \\
&\ustep_\uprj&& (\SET{x,y.x \approx x,y.x},
\SEt{M \mapsto z_1,z_2.z_1, N \mapsto z_1,z_2.z_2}) \\
&\ustep_\urem&&
(\varnothing, \SEt{M \mapsto z_1,z_2.z_1, N \mapsto z_1,z_2.z_2})
\end{alignat*}
Similarly, we can use \uprj to project $M$ to its second argument
and therefore obtain unifier (ii) from \exaref{dhp}:
\begin{alignat*}{2}
&&~&
(\SET{x,y.M(\m{f}(x),\m{f}(y)) \approx x,y.\m{f}(N(y,x))}, \varnothing) \\
&\ustep_\uprj&& (\SET{x,y.\m{f}(y) \approx x,y.\m{f}(N(y,x))},
\SEt{M \mapsto z_1,z_2.z_2}) \\
&\ustep^*&&
(\varnothing, \SEt{M \mapsto z_1,z_2.z_2, N \mapsto z_1,z_2.z_1})
\end{alignat*}
Finally, we can use \uimt and eventually obtain unifier (iii) from
\exaref{dhp}:
\begin{alignat*}{2}
&&~&
(\SET{x,y.M(\m{f}(x),\m{f}(y)) \approx x,y.\m{f}(N(y,x))}, \varnothing) \\
&\ustep_\uimt&&
(\SET{x,y.\m{f}(H(\m{f}(x),\m{f}(y))) \approx x,y.\m{f}(N(y,x))},
\theta_1) \\
&\ustep_\udec&& (\SET{x,y.H(\m{f}(x),\m{f}(y)) \approx x,y.N(y,x)},
\theta_1) \\
&\ustep_\uffne&&
(\SET{x,y.Z(\m{f}(x),\m{f}(y)) \approx x,y.Z(\m{f}(x),\m{f}(y))},
\theta) \\
&\ustep_\urem&& (\varnothing,\theta)
\end{alignat*}
where $H$, $Z$ are fresh variables,
$\theta_1 = \SET{M \mapsto z_1,z_2.\m{f}(H(z_1,z_2))}$ and
\[
\theta = \SET{M \mapsto z_1,z_2.\m{f}(Z(z_1,z_2)),
H \mapsto z_1,z_2.Z(z_1,z_2),
N \mapsto z_1,z_2.Z(\m{f}(z_2),\m{f}(z_1))}
\]
Note that $\theta \subsumes \delta\restr{\SET{M,N}}$ where
$\delta$ is unifier (iii) from \exaref{dhp}.
\end{example}

\begin{example}
\label{exa:infinite2}
Recall \exaref{infinite}. Starting from
$(\SET{x.M(\m{f}(x)) \approx x.\m{f}(M(x))},\varnothing)$, we can
apply \uprj to get
$(\SET{x.\m{f}(x) \approx x.\m{f}(x)},\SET{M \mapsto z.z})$ and
therefore one of the solutions by one application of \urem, or we
apply \uimt to get
$(\SET{x.\m{f}(H(\m{f}(x))) \approx x.\m{f}(\m{f}(H(x)))},
\SET{M \mapsto z.\m{f}(H(z))})$ where $H$ is a fresh variable. After
applying \udec we are back to a renamed variant of the initial
equation. Hence, although the inference system is finitely
branching, its nontermination facilitates the enumeration of
infinite set of unifiers.
\end{example}

The following result states that the inference system is well-defined,
i.e., applying a step does not yield terms which are not DHPs. Even
though the statement sounds trivial, the proof is quite technical since
we have to ensure that the local restriction is never violated.
Intuitively, the theorem holds because the substitutions which are applied
to the unification problem in the inference system map variables to DHPs,
and if such a substitution is applied to a DHP, the result is again a DHP.
A detailed proof of the theorem is available in \appref{dhpu}.

\begin{theorem}
\label{thm:dhpu}
If $(E,\theta) \ustep (E',\theta')$ and $E \subseteq \dhp \times \dhp$
then $E' \subseteq \dhp \times \dhp$.
\end{theorem}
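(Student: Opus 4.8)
The proof proceeds by case analysis on the applied inference rule. Two rules are essentially immediate. For \urem the conclusion $E$ is a subset of the premise, so there is nothing to show. For \udec I would use that the DHP property is inherited by immediate subterms: if $\vec[k]{x}.h(\vec{s}) \in \dhp$ with $h \in \xF \cup \SEt{\vec[k]{x}}$, then every $\vec[k]{x}.s_i \in \dhp$, since each free-variable-headed subterm of $\vec[k]{x}.s_i$ is also one of $\vec[k]{x}.h(\vec{s})$ and hence already satisfies conditions \enuref{dhp-var}--\enuref{dhp-subt}. I would isolate this as an auxiliary lemma (closure of $\dhp$ under the subterm relation for non-variable heads).

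All remaining rules replace a variable $F$ (and, in \uffne, additionally $G$) by a term $u$ inside every pair of $E$, so the crux is a substitution lemma stating that the relevant instances of $\SEt{F \mapsto u}$ map $\dhp$ into $\dhp$. The decisive structural fact, coming from condition \enuref{dhp-var}, is that an argument of a free variable in a DHP has its free variables among the enclosing binders and hence contains no genuinely free variable. Consequently $F$ can occur in a DHP only at \emph{spine} positions, i.e.\ as the head of a subterm $\vec{w}.F(\vec{a})$ whose argument list $\vec{a}$ is itself a DHP var-arg list, and never inside the arguments of another free variable. Applying a partial binding $u \in \ib{\tau} \cup \pb{\tau}$ (rules \uimt, \uprj) therefore only grafts the function-symbol/bound-variable skeleton of $u$ onto these spine positions and introduces the fresh variables $H_i$ of $u$ with argument lists of the shape $(\vec{a},\vec[k_i]{y^i}\cf)$. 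The plan is to verify conditions \enuref{dhp-var}--\enuref{dhp-subt} for exactly these new lists, since every other free-variable-headed subterm is left untouched by the substitution.

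The main obstacle is condition \enuref{dhp-subt} (the local restriction). For the combined lists $(\vec{a},\vec[k_i]{y^i}\cf)$ I would argue: among the $\vec{a}$ it holds by assumption; among the fresh $\vec[k_i]{y^i}\cf$ it holds because distinct bound variables are $\subtEeq$-incomparable; and across the two groups it holds because the $\vec[k_i]{y^i}$ are fresh and do not occur in any $a_j$, which blocks an expanded-subterm relation in either direction. The genuinely delicate case is \uvar, where $u$ is an \emph{arbitrary} DHP substituted for $F(\vec[k]{x})$: an occurrence $\vec{w}.F(\vec{a})$ becomes $u$ with its top-level parameters $\vec[k]{x}$ instantiated by the DHP var-arg list $\vec{a}$, so every var-arg list $\vec{q}$ inside $u$ is replaced by $\vec{q}\SEt{\vec[k]{x} \mapsto \vec{a}}$. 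I would prove the key lemma that such an instantiation preserves $\nsubtEeq$, namely that $q_i\SEt{\vec[k]{x}\mapsto\vec{a}} \nsubtEeq q_j\SEt{\vec[k]{x}\mapsto\vec{a}}$ for $i \neq j$ whenever $\vec{q}$ satisfies the local restriction and $\vec{a}$ is a DHP var-arg list. The point is that substitution can create a new expanded-subterm relation between two instantiated arguments only if some $a_l$ were already an expanded subterm of another $a_{l'}$, which is exactly what the local restriction on $\vec{a}$ forbids; conditions \enuref{dhp-var} and \enuref{dhp-exp} are comparatively routine to transfer.

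Finally, for the flex-flex rules the newly introduced variable $H$ is equipped only with re-selected or recombined copies of the original arguments. In \uffe the argument list of $H$ is a sub-list of the original DHP var-arg list, so conditions \enuref{dhp-var}--\enuref{dhp-subt} are inherited directly. In \uffne the arguments of $H$ are assembled from the sets $P_1$ and $P_2$, that is, from projections and from context terms $w$ that reconstruct an original argument via the matching equations; here I would again reduce condition \enuref{dhp-subt} to the local restriction of the original var-arg lists, using that each assembled argument is $\subtEeq$-equivalent to one of the original arguments. Combining the subterm-closure lemma for \urem and \udec with the substitution lemma for the remaining rules then yields $E' \subseteq \dhp \times \dhp$ in every case. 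I expect the instantiation lemma for \uvar, together with the bookkeeping for \uffne, to be the most laborious parts, precisely because they are where the local restriction must be shown to survive substitution.
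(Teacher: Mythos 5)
Your overall plan coincides with the paper's proof to a large extent: \urem and \udec are immediate, free variables of a DHP occur only as heads of spine subterms with var-arg lists as arguments (by condition \enuref{dhp-var}), and the technical core is exactly the instantiation lemma you propose for \uvar --- in the paper this is \lemref{dhpVarArgListSubst}, stating that substituting a DHP var-arg list into a DHP var-arg list again yields a DHP var-arg list, proved by a lengthy induction with case analysis on heads. The paper packages the per-rule verification into one closure lemma (\lemref{dhpsubst}: DHPs are closed under substitutions whose images are DHPs) and then only has to check that the bindings used by \uvar, \uimt, \uprj and \uffe are DHPs; your direct verification for \uimt, \uprj and \uffe (including the freshness argument blocking $\subtEeq$-comparisons between the old arguments and the new $\vec[k_i]{y^i}\cf$) is a correct special case of that. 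You also correctly flag the instantiation lemma as the most laborious part.

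There is, however, a genuine gap in your treatment of \uffne. You claim the local restriction for $H$'s arguments reduces to the local restriction of the original var-arg lists ``using that each assembled argument is $\subtEeq$-equivalent to one of the original arguments''. What is actually true is that each argument of $u$ \emph{instantiates} under $\SEt{\vec{y} \mapsto \vec{s}}$ to an original argument: a projection $\vec{y}.y_{i'}\cf$ from $P_1$ becomes $\vec[k]{x}.s_{i'}$ (an argument of $F$), while a context from $P_2$ becomes some $\vec[k]{x}.t_{j'}$ (an argument of $G$). Hence, when you push a hypothetical violation $\vec{y}.u_i \subtEeq \vec{y}.u_j$ through the substitution (the paper's \lemref{subtEclosedUnderSubst}), in the mixed projection/context case you land on a relation between an argument of $F$ and an argument of $G$ --- and $E \subseteq \dhp \times \dhp$ forbids nothing about such cross-pairs; ruling them out is precisely FCU's \emph{global} restriction, which DHPs deliberately drop. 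The paper closes this hole with an idea absent from your sketch: transitivity of $\subtEeq$ (\lemref{subtEtrans}) combined with the \emph{other} component of the $P_1$ pair. Concretely, if $\vec{y}.u_j = \vec{y}.y_{i'}\cf$ is in the first components of $P_1$, then by definition $s_{i'}$ is reconstructible from the $t_j$'s, so some $t_{j''}$ is an expanded subterm of $s_{i'}$; chaining this with the transferred cross-relation between $t_{j'}$ and $s_{i'}$ produces a relation between the two arguments $t_{j'}$ and $t_{j''}$ of $G$, which either contradicts $E \subseteq \dhp \times \dhp$ (when $j' \neq j''$) or forces $s_{i'} = t_{j'}$ and hence $i = j$ (when $j' = j''$). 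Without this folding step your reduction stalls exactly where the absence of a global restriction bites, so you should add \lemref{subtEtrans}-style transitivity and the case analysis over $P_1$/$P_2$ origins to your plan.
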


\section{Correctness}
\label{sec:correct}

We start by proving soundness of the inference system, i.e.,
successful derivations produce unifiers. In the following,
$\theta \cdot U$ denotes $\SET{\theta\delta \mid \delta \in U}$.

\begin{lemma}
\label{lem:sound}
If $(E_1,\theta) \ustep (E_2,\theta\delta)$ then
$\delta \cdot \unifiers{E_2} \subseteq \unifiers{E_1}$.
\end{lemma}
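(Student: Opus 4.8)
The plan is to do a case analysis on the inference rule justifying the step $(E_1,\theta) \ustep (E_2,\theta\delta)$, where in each case $\delta$ denotes the substitution for which the second component takes the form $\theta\delta$: the identity for \urem and \udec, $\SET{F \mapsto u}$ for \uvar, \uimt, \uprj and \uffe, and $\SET{F \mapsto u, G \mapsto v}$ for \uffne. Fixing an arbitrary $\sigma \in \unifiers{E_2}$, the goal in every case is to show that $\delta\sigma$ unifies every pair of $E_1$, which is exactly $\delta\sigma \in \unifiers{E_1}$.

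I would first dispose of the pairs shared by $E_1$ and $E_2$ uniformly. In each rule, every pair $p \approx q$ of $E_1$ other than the distinguished one reappears in $E_2$ as $p\delta \approx q\delta$ (unchanged when $\delta$ is the identity, as in \urem and \udec); for \uimt and \uprj the distinguished pair is itself retained in this form. Since $\sigma$ unifies $E_2$ we have $(p\delta)\sigma = (q\delta)\sigma$, and \lemref{substAppAssoc} rewrites this as $p(\delta\sigma) = q(\delta\sigma)$. Hence $\delta\sigma$ unifies all such pairs, so for \uimt and \uprj nothing remains, and the trivial pair removed by \urem is unified by anything.

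It then remains to treat the distinguished pair eliminated or decomposed by the other rules. For \udec, from $(\vec[k]{x}.s_i)\sigma = (\vec[k]{x}.t_i)\sigma$ for all $i$ I would recover $(\vec[k]{x}.h(\vec{s}))\sigma = (\vec[k]{x}.h(\vec{t}))\sigma$ by pushing $\sigma$ under the common head, noting that $h \in \xF \cup \SET{\vec[k]{x}}$ lies outside $\dom{\sigma}$. For \uvar the pair is $\vec[k]{x}.F(\vec[k]{x}) \approx u$ with $F \notin \fv{u}$; here $(\vec[k]{x}.F(\vec[k]{x}))\delta = u$ because applying $\SET{F \mapsto u}$ to the $\eta$-long form of $F$ returns $u$, while $u\delta = u$ since $F \notin \fv{u}$, so both sides collapse to $u$ after $\delta$ and the claim follows via \lemref{substAppAssoc}. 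The observation driving the two flex-flex rules is that, by \defref{dhp}\,\enuref{dhp-var}, the arguments $\vec{s}$ and $\vec{t}$ of the free variables contain no free variables besides the bound $\vec[k]{x}$; consequently $\vec{s}\delta = \vec{s}$ and $\vec{t}\delta = \vec{t}$, so $\delta$ acts on the distinguished pair only through the heads. For \uffe this gives $(\vec[k]{x}.F(\vec{s}))\delta = \vec[k]{x}.H(\ldots,s_{i_j},\ldots)$ and $(\vec[k]{x}.F(\vec{t}))\delta = \vec[k]{x}.H(\ldots,t_{i_j},\ldots)$, which are syntactically equal because the retained positions are exactly those with $\vec[k]{x}.s_{i_j} = \vec[k]{x}.t_{i_j}$; applying $\sigma$ and \lemref{substAppAssoc} closes the case.

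The main work, and the step I expect to be the obstacle, is \uffne. Here I would compute $(\vec[k]{x}.F(\vec{s}))\delta$ and $(\vec[k]{x}.G(\vec[m]{t}))\delta$ to a common head $H$ whose $i$-th arguments are $u_i\SET{\vec{y} \mapsto \vec{s}}$ and $v_i\SET{\vec[m]{z} \mapsto \vec[m]{t}}$, and then match them position by position. For $(\vec{y}.u_i, \vec[m]{z}.v_i) \in P_1$ we have $u_i = y_j\cf$ and $v_i = w$ with $\vec[k]{x}.s_j = \vec[k]{x}.w\SET{\vec[m]{z} \mapsto \vec[m]{t}}$, so the two arguments reduce to $s_j$ and $w\SET{\vec[m]{z} \mapsto \vec[m]{t}}$, which agree by the defining condition of $P_1$; the case $P_2$ is symmetric. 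The delicate points are the bookkeeping of the $\eta$-expansions $\cf$, the capture-avoidance conditions underlying $\SET{\vec{y} \mapsto \vec{s}}$ and $\SET{\vec[m]{z} \mapsto \vec[m]{t}}$, and the repeated use of \lemref{substAppAssoc} to justify $(\vec[k]{x}.F(\vec{s}))(\delta\sigma) = ((\vec[k]{x}.F(\vec{s}))\delta)\sigma$. Once the two sides are shown equal after $\delta$, applying $\sigma$ yields that $\delta\sigma$ unifies the distinguished pair, completing the case analysis.
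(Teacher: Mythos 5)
Your proposal is correct and takes essentially the same route as the paper's proof: a case analysis on the inference rules in which \lemref{substAppAssoc} transfers unifiers of $E\delta$ back to $E$, the \uimt/\uprj cases are immediate from $E_2 = E_1\delta$, and for \uvar, \uffe and \uffne one checks that applying $\delta$ alone already makes the distinguished pair syntactically equal (for \uffne via the defining conditions of $P_1$ and $P_2$). Your additional explicit computations, such as invoking condition \enuref{dhp-var} of \defref{dhp} to see that $\delta$ leaves the variable arguments untouched, only spell out steps the paper leaves implicit.
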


\begin{proof}
We proceed by case analysis on the inference rule applied in the
step. For \urem and \udec, the result is trivial because $\delta$
is the identity substitution and $\unifiers{E_2} = \unifiers{E_1}$.
Regarding \uvar, we have
$E_1 = \SET{\vec[k]{x}.F(\vec[k]{x}) \approx u} \uplus E$ with
$F \notin \fv{u}$, $\delta = \SET{F \mapsto u}$ and $E_2 = E\delta$.
Let $\gamma \in \unifiers{E_2}$, so
$\delta\gamma \in \unifiers{E}$. Since $\delta\gamma$ also unifies
$\vec[k]{x}.F(\vec[k]{x})$ and $u$,
$\delta\gamma \in \unifiers{E_1}$. For \uimt and \uprj, we have
$E_2 = E_1\delta$, so if $\gamma \in \unifiers{E_2}$ then
$\delta\gamma \in \unifiers{E_1}$ as desired. Next, consider
\uffe. We have
$E_1 = \SET{\vec[k]{x}.F(\vec{s}) \approx \vec[k]{x}.F(\vec{t})}
\uplus E$, $\delta = \SET{F \mapsto u}$ where $u$ is defined as in
\defref{unif} and $E_2 = E\delta$. Let $\gamma \in \unifiers{E_2}$,
so $\delta\gamma \in \unifiers{E}$. Since $\delta\gamma$ also
unifies $\vec[k]{x}.F(\vec{s})$ and $\vec[k]{x}.F(\vec{t})$,
$\delta\gamma \in \unifiers{E_1}$. Finally, consider \uffne. We have
$E_1 = \SET{\vec[k]{x}.F(\vec{s}) \approx \vec[k]{x}.G(\vec[m]{t})}
\uplus E$, $\delta = \SET{F \mapsto u, G \mapsto v}$ where $u$ and
$v$ are defined as in \defref{unif} and $E_2 = E\delta$. Let
$\gamma \in \unifiers{E_2}$, so $\delta\gamma \in \unifiers{E}$.
Without loss of generality,
$\SEt{\vec[k]{x}}$ is fresh for $\delta\gamma$, so
$(\vec[k]{x}.F(\vec{s}))\delta\gamma =
\vec[k]{x}.H(\vec[l]{u}\SEt{\vec{y} \mapsto \vec{s}}\delta\gamma)$ and
$(\vec[k]{x}.G(\vec[m]{t}))\delta\gamma =
\vec[k]{x}.H(\vec[l]{v}\SEt{\vec[m]{z} \mapsto \vec[m]{t}}\delta\gamma)$.
Furthermore, we have
$\vec[k]{x}.u_i\SEt{\vec{y} \mapsto \vec{s}} =
\vec[k]{x}.v_i\SEt{\vec[m]{z} \mapsto \vec[m]{t}}$ for all
$1 \leq i \leq l$ by definition. Hence, $\delta\gamma$ also unifies
$\vec[k]{x}.F(\vec{s})$ and $\vec[k]{x}.G(\vec[m]{t})$ and
$\delta\gamma \in \unifiers{E_1}$ as desired. \qed
\end{proof}

\begin{theorem}
\label{thm:sound}
If $(E,\varnothing) \ustep^* (\varnothing,\theta)$ then
$\theta|_{\fv{E}} \in \unifiers{E}$.
\end{theorem}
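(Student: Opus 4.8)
The plan is to reduce the theorem to a single application of \lemref{sound}, after first lifting that one-step result to the multi-step relation $\ustep^*$. The crucial observation is that the conclusion of \lemref{sound} mentions only the step substitution $\delta$ together with the two equation sets, and is entirely independent of the ambient substitution accumulated so far. This is precisely what makes a clean induction possible, since the tail of a successful derivation does not restart from the empty substitution.

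Concretely, I would first prove the following generalization of \lemref{sound} by induction on the length of the derivation: whenever $(E_1,\mu) \ustep^* (E_2,\mu\delta)$, we have $\delta \cdot \unifiers{E_2} \subseteq \unifiers{E_1}$. The base case is the empty derivation, where $\delta$ is the identity substitution and the inclusion is immediate. For the inductive step I would split off the first step $(E_1,\mu) \ustep (E_1',\mu\delta_1)$, apply \lemref{sound} to obtain $\delta_1 \cdot \unifiers{E_1'} \subseteq \unifiers{E_1}$, and apply the induction hypothesis to the remaining derivation $(E_1',\mu\delta_1) \ustep^* (E_2,\mu\delta_1\delta')$ to obtain $\delta' \cdot \unifiers{E_2} \subseteq \unifiers{E_1'}$. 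Writing $\delta = \delta_1\delta'$ and using associativity of composition (which follows from \lemref{substAppAssoc}) to rewrite $\delta \cdot \unifiers{E_2} = \delta_1 \cdot (\delta' \cdot \unifiers{E_2})$ as sets, the two inclusions chain to yield $\delta \cdot \unifiers{E_2} \subseteq \unifiers{E_1}$.

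Instantiating this generalization with $E_1 = E$, $\mu = \varnothing$, and $E_2 = \varnothing$ gives $\theta \cdot \unifiers{\varnothing} \subseteq \unifiers{E}$, since composing the empty substitution with $\theta$ leaves $\theta$ unchanged. As every substitution trivially unifies the empty set of pairs, the empty substitution lies in $\unifiers{\varnothing}$, whence $\theta = \theta\varnothing \in \theta \cdot \unifiers{\varnothing} \subseteq \unifiers{E}$. Finally, since for every pair $s \approx t \in E$ the free variables of $s$ and $t$ are contained in $\fv{E}$, the restriction $\theta|_{\fv{E}}$ agrees with $\theta$ on all relevant variables, so $s(\theta|_{\fv{E}}) = s\theta = t\theta = t(\theta|_{\fv{E}})$ and therefore $\theta|_{\fv{E}} \in \unifiers{E}$ as claimed.

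I do not expect a deep obstacle, as the argument is essentially a routine chaining of \lemref{sound}. The only points demanding care are choosing the generalization above so that the nonempty ambient substitution $\mu$ does not block the induction, and the manipulation of the set operation $\delta \cdot U$, which hinges on associativity of composition supplied by \lemref{substAppAssoc}.
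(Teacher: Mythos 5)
Your proposal is correct and takes essentially the same route as the paper: an induction on the length of the derivation that chains \lemref{sound} step by step using associativity of substitution composition (from \lemref{substAppAssoc}), followed by instantiating the resulting inclusion $\theta \cdot \unifiers{\varnothing} \subseteq \unifiers{E}$ with the identity substitution and observing that restriction to $\fv{E}$ is harmless. The only difference is cosmetic: you peel off the \emph{first} step, which forces you to generalize the auxiliary statement to an arbitrary ambient substitution $\mu$, whereas the paper peels off the \emph{last} step and can therefore keep the initial substitution fixed to $\varnothing$ throughout the induction, relying on the ambient substitution already built into the statement of \lemref{sound}; both variants are sound.
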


\begin{proof}
We start by proving that $(E_1,\varnothing) \ustep^n (E_2,\theta_2)$
implies $\theta_2 \cdot \unifiers{E_2} \subseteq \unifiers{E_1}$ for
all $n \geq 0$ by induction on $n$. The base case $n = 0$ is trivial.
For the step case, consider $n > 0$, so
$(E_1,\theta) \ustep^{n-1} (E_3,\theta_3) \ustep (E_2,\theta_2)$
where $\theta_2 = \theta_3\delta$ for some substitution $\delta$.
From \lemref{sound} we obtain
$\delta \cdot \unifiers{E_2} \subseteq \unifiers{E_3}$.
Furthermore, the induction hypothesis yields
$\theta_3 \cdot \unifiers{E_3} \subseteq \unifiers{E_1}$. Since
substitution composition is associative,
$\theta_2 \cdot \unifiers{E_2} = \theta_3 \cdot
(\delta \cdot \unifiers{E_2}) =  \subseteq \unifiers{E_1}$
which concludes the proof
of this auxiliary statement. In particular, for
$(E,\varnothing) \ustep^* (\varnothing,\theta)$ we obtain
$\theta \cdot \unifiers{\varnothing} \subseteq \unifiers{E}$ as a
special case. Since the identity substitution is an element of
$\unifiers{\varnothing}$ and the restriction has no effect on terms
in $E$, we conclude that $\theta|_{\fv{E}} \in \unifiers{E}$. \qed
\end{proof}

Next, we turn our attention to nondeterministic completeness:
If $\delta \in \unifiers{E}$ then there exists a successful
derivation starting with $E$ which produces a unifier
$\theta$ such that $\theta \subsumes \delta\restr{\fv{E}}$.
The upcoming definition and lemmata establish the existence of a
terminating relation $\custep_W$ which will be used to construct a
successful run from an existing unifier. The proof idea originates from
the respective proof for full higher-order unification in \cite{SG89}.

\begin{definition}
\label{def:custep}
Consider a unification pair $(E,\theta)$, a set of variables $W$ and
an idempotent substitution $\delta \in \unifiers{E}$ such that
$\theta\delta = \delta\restr{W}$. We write
$(E,\theta,\delta) \custep_W (E',\theta',\delta')$ if
$(E,\theta) \ustep (E',\theta')$, $\delta'$ is an idempotent
substitution such that $\delta' \in \unifiers{E'}$ and
$\theta'\delta' = \delta' = \delta\restr{W}$.
\end{definition}

\begin{lemma}
\label{lem:nondetComplete}
Consider a unification pair $(E,\theta)$, a set of variables $W$ and
an idempotent substitution $\delta \in \unifiers{E}$ such that
$\theta\delta = \delta\restr{W}$. If $E \neq \varnothing$ then
$(E,\theta,\delta) \custep_W (E',\theta',\delta')$ for some
$(E',\theta',\delta')$.
\end{lemma}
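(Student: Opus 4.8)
The plan is to pick an arbitrary equation $e \in E$, which exists because $E \neq \varnothing$, and to proceed by case analysis on the heads of the two sides $e = \vec[k]{x}.s \approx \vec[k]{x}.t$. In each case I have to produce a step $(E,\theta) \ustep (E',\theta')$ together with an idempotent $\delta' \in \unifiers{E'}$ satisfying $\theta'\delta' = \delta' = \delta\restr{W}$. The routine cases come first. If $s = t$ I apply \urem, taking $\theta' = \theta$ and $\delta' = \delta$, which works since $E' \subseteq E$. If both heads are rigid, i.e.\ lie in $\xF \cup \SEt{\vec[k]{x}}$, then $s\delta = t\delta$ forces the two heads to coincide, so \udec applies and $\delta' = \delta$ unifies the decomposed pairs. (When the flex side has exactly $\vec[k]{x}$ as its arguments and its head does not occur on the other side, \uvar offers a direct shortcut, but it is not needed for the argument.)

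For the flex-rigid cases, say $s = \vec[k]{x}.F(\vec{s})$ and $t = \vec[k]{x}.h(\vec[m]{t})$ with $h \in \xF \cup \SEt{\vec[k]{x}}$, I inspect $F\delta = \vec{y}.r$. Since $\delta$ unifies $e$, the head of $r$ determines the choice: if $\hd{r}$ is a function symbol then it must equal $h$ and I apply \uimt with an imitation binding $u \in \ib{\tau}$, whereas if $r$ projects onto one of its arguments I apply \uprj with a projection binding $u \in \pb{\tau}$, which reproduces the correct head $h$. Setting $\gamma = \SET{F \mapsto u}$, the core task is to define $\mu$ on the fresh variables $H_1,\dots,H_m$ of $u$ by reading off the corresponding subterms of $F\delta$, so that $u\delta' = F\delta$, i.e.\ $\gamma\delta' = \delta\restr{\dom{\gamma}}$ for $\delta' = \delta \uplus \mu$. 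As the $H_j$ are fresh with $\dom{\mu} \subseteq \fv{\gamma}$, \lemref{ndcAux} then delivers $\delta' \in \unifiers{E\gamma} = \unifiers{E'}$ and the required identity $\theta'\delta' = \delta' = \delta\restr{W}$.

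The flex-flex cases rely on the DHP structure to see that the bindings produced by \uffe and \uffne are general enough. By condition \enuref{dhp-var} of \defref{dhp}, the arguments of a free variable contain only the bound variables $\vec[k]{x}$, which $\delta$ leaves untouched; hence $\vec[k]{x}.s_i\delta = \vec[k]{x}.s_i$, so the syntactic tests in these rules coincide with the semantic ones. For \uffe with $s = \vec[k]{x}.F(\vec{s})$ and $t = \vec[k]{x}.F(\vec{t})$, the binding $u$ keeps exactly the positions where $\vec[k]{x}.s_i = \vec[k]{x}.t_i$; since $\delta$ unifies $e$, the body of $F\delta$ cannot depend on the dropped positions, so $F\delta$ factors through $u$ and $\mu$ is read off as before. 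For \uffne with $F \neq G$, I may assume without loss of generality (by the freshness and idempotency conventions on $\delta$) that $\SET{F \mapsto \vec{y}.p, G \mapsto \vec[m]{z}.q} \subseteq \delta$, and invoke \lemref{ffneComplete}: its witness term $\vec[l]{w}.u$ yields, via parts \enuref{ffneCompleteF} and \enuref{ffneCompleteG}, a binding $\mu = \SET{H \mapsto \vec[l]{w}.u}$ with $u\delta' = F\delta$ and $v\delta' = G\delta$. Again \lemref{ndcAux} closes the case.

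The main obstacle is the flex-flex-different-variables case, since the arguments $\vec[l]{u}$ and $\vec[l]{v}$ of the new binding are assembled from the two sets $P_1$ and $P_2$ of \defref{unif}, and it is not evident that $F\delta$ and $G\delta$ can be recovered \emph{simultaneously} from a single fresh variable $H$. This is precisely the content supplied by \lemref{ffneComplete}, so the work here concentrates on matching its hypotheses --- in particular arranging $\SET{F \mapsto \vec{y}.p, G \mapsto \vec[m]{z}.q} \subseteq \delta$ (using that $\delta$ must instantiate both heads for an $F \neq G$ pair to unify) --- and on checking that the extended $\delta'$ remains idempotent. The remaining verifications, that each constructed $\delta'$ unifies $E'$ and meets the invariant $\theta'\delta' = \delta' = \delta\restr{W}$, follow routinely once \lemref{ndcAux} is applied.
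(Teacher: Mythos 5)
Your proposal takes essentially the same route as the paper: pick an equation, do a case analysis on its heads, in each case extend $\delta$ to $\delta' = \delta \uplus \mu$ by reading bindings for the fresh variables off the existing unifier, and discharge the invariant via \lemref{ndcAux}, with \lemref{ffneComplete} doing the real work in the different-heads flex-flex case. The one structural deviation is that you discard \uvar, routing every flex-rigid pair through \uimt/\uprj and every flex-flex pair through \uffe/\uffne. For \lemref{nondetComplete} in isolation this is defensible --- the remaining rules do cover all cases and your constructions go through --- but it is not an innocent simplification in the context of the paper: the proof of the minimality result (Theorem~\ref{lem:mincomplete}) relies on the fact that the derivations produced by \lemref{nondetComplete} are \emph{locally optimal}, i.e.\ give \urem\ and \uvar\ precedence, so that completeness survives the restriction to locally optimal, fresh-var-deterministic derivations. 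A completeness construction that applies \uimt\ or \uffne\ to a pair on which \uvar\ is applicable yields derivations that are not locally optimal, and the later theorem's appeal to this lemma would break. So your parenthetical ``not needed for the argument'' is true of the lemma but false of the role the lemma plays in the paper.

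Beyond that, one step of your sketch is genuinely underjustified: in the \uffe\ case you argue that ``the body of $F\delta$ cannot depend on the dropped positions'' from condition \enuref{dhp-var} of \defref{dhp} (which only gives $\vec[k]{x}.s_i\delta = \vec[k]{x}.s_i$). That is not sufficient; what is needed is that substituting two \emph{distinct} argument terms into the same occurrence necessarily produces distinct results, and this is exactly where expandedness (condition \enuref{dhp-exp}) enters --- the paper says explicitly ``as all $y_i$'s are mapped to expanded terms.'' Without expandedness an argument could be absorbed by the surrounding context and the claim would fail. Finally, your assertion that ``$\delta$ must instantiate both heads for an $F \neq G$ pair to unify'' is actually false in corner cases (an idempotent unifier such as $\SET{G \mapsto z.F(z)}$ for $x.F(\m{f}(x)) \approx x.G(\m{f}(x))$ leaves $F$ uninstantiated, so one should really feed \lemref{ffneComplete} the bodies of $F\cf\delta$ and $G\cf\delta$ rather than literal bindings in $\delta$); however, the paper's own proof makes the same assumption with the same justification, so on this point your proposal is at parity with the paper rather than behind it.
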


\begin{lemma}
\label{lem:custepTerm}
The relation $\custep_W$ is terminating for every set $W$ of variables.
\end{lemma}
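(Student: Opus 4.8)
The plan is to produce a well-founded measure that every $\custep_W$ step strictly decreases. Since a $\custep_W$ step is in particular a $\ustep$ step, I would attach to each triple $(E,\theta,\delta)$ arising in a $\custep_W$ sequence the lexicographic pair $(\Phi_1(E,\delta),\Phi_2(E,\delta)) \in \mathbb{N} \times \mathbb{N}$, where
\[
\Phi_1(E,\delta) = \sum_{y \in \fv{E}} |\delta(y)|
\qquad
\Phi_2(E,\delta) = \sum_{s \approx t \in E} |s\delta|
\]
Intuitively $\Phi_1$ is the amount of solution still pending on the live variables and $\Phi_2$ is the size of the already-unified equations; both are finite because $E$ is finite, and $\Phi_2$ is independent of the chosen side because $\delta \in \unifiers{E}$. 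The measure itself depends only on $(E,\delta)$, while $\theta$ enters only through the auxiliary lemma below.

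I would then verify the seven rules against the lexicographic order. \urem and \udec leave $\delta$ and the variable set unchanged (no fresh variable, no eliminated head), so $\Phi_1$ does not grow, while $\Phi_2$ decreases by the deleted trivial pair, respectively the consumed rigid head. \uvar deletes $F$ from $\fv{E}$ and adds no variable (the side condition $F \notin \fv{u}$ together with $\fv{u} \subseteq \fv{E}$), so $\Phi_1$ drops by $|\delta(F)| \geq 1$. In \uimt and \uprj the variable $F$ is replaced by a binding with a rigid head and fresh arguments $H_1,\dots,H_m$; the head symbol leaves $\delta(F)$ and its remaining arguments are handed to the $H_i$, giving $\sum_i |\delta'(H_i)| = |\delta(F)| - 1$, so $\Phi_1$ drops by one. \uffne removes at least one of $F,G$ and introduces a single shared $H$ whose image is the common part $\vec[l]{w}.u$ supplied by \lemref{ffneComplete}, whose size is bounded by both $|\delta(F)|$ and $|\delta(G)|$; hence $\Phi_1$ strictly decreases. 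The only rule that may keep $\Phi_1$ fixed is \uffe, where the shared variable inherits the entire body of $\delta(F)$; there the removed flex-flex pair makes $\Phi_2$ drop, so the pair still decreases lexicographically.

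The hard part is that \defref{custep} introduces $\delta'$ existentially: a step only demands \emph{some} idempotent $\delta' \in \unifiers{E'}$ with $\theta'\delta' = \delta' = \delta\restr{W}$, so I may not assume $\delta'$ is the extension $\delta \uplus \mu$ built in \lemref{nondetComplete}, and an arbitrary witness could a priori blow up the freshly introduced variables. The crux is therefore a separating lemma proved first: every witness $\delta'$ coincides on $\fv{E'}$ with the substitution constructed in \lemref{nondetComplete}. For the fresh variables this is uniqueness of the residual, which follows from $\delta' \in \unifiers{E'}$ and determinism of matching (\thmref{matchUnique}); for the surviving old variables it follows from the fixpoint identity $\theta'\delta' = \delta'$ over $W$, using the structural invariant that in every partial binding — and hence in every accumulated $\theta$ — a free variable occurs only applied to pairwise distinct $\eta$-expanded bound variables, so that two term-equal instantiations of the same template must already agree on each such variable. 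Establishing this invariant (in particular its preservation under composition of the bindings) and turning it into the pinning statement is where essentially all of the effort lies; once it is available, the size bounds used in the case analysis are immediate from the construction.

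With the separating lemma at hand, each of the seven rules decreases $(\Phi_1,\Phi_2)$ in the lexicographic order on $\mathbb{N} \times \mathbb{N}$, which is well founded. Consequently no infinite $\custep_W$ chain can exist, establishing the claim for every set $W$ of variables.
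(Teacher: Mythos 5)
Your measure is, up to the cosmetic change in the second component, exactly the paper's: your $\Phi_1$ is literally its $m = \sum\SET{|\delta(x)| \mid x \in \fv{E}}$, your $\Phi_2$ plays the role of its $n$, and the rule-by-rule case analysis coincides. Where you go beyond the paper is in noticing that \defref{custep} quantifies $\delta'$ existentially, so a termination proof must handle \emph{arbitrary} witnesses; the paper's proof quietly assumes the witness constructed in \lemref{nondetComplete} (its justification of the \uimt, \uprj and \uffne cases is phrased ``using the notation from the proof of \lemref{nondetComplete}''). Isolating that subtlety is a genuine contribution of your write-up.

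However, your repair fails, and it fails at the separating lemma itself. First, membership in $\unifiers{E'}$ cannot pin $\delta'$ on the fresh variables: DHP unification is not unitary --- that is the point of the paper. Concretely, after the \uimt step of \exaref{infinite2} the new system is $\SET{x.\m{f}(H(\m{f}(x))) \approx x.\m{f}(\m{f}(H(x)))}$, and both $\SET{H \mapsto x.x}$ and $\SET{H \mapsto x.\m{f}(x)}$ are idempotent unifiers of it; \thmref{matchUnique} is about \emph{matching} (one side fixed), and the only matching constraints available are $\theta'(x)\delta' = \delta(x)$ for $x \in W$. Hence all pinning, for fresh and old variables alike, must flow from $W$ through $\theta'$ --- and when $W$ does not reach the variables of $E$ there is none. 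This is fatal for the statement in the claimed generality: for $W = \varnothing$ every side condition of \defref{custep} is vacuous, so the looping \uimt/\udec derivation of \exaref{infinite2}, equipped with the witnesses $\delta_k = \SET{H_k \mapsto x.x}$, is an infinite $\custep_{\varnothing}$ chain along which $\Phi_1$ is constantly $1$. So no separating lemma (and no measure) can prove the lemma ``for every set $W$''; what is provable, and what \thmref{nondetComplete} actually uses, is termination from triples satisfying the invariant that every variable of $\fv{E}$ occurs in $\theta(x)$ for some $x \in W$ --- true initially for $W = \fv{E}$, $\theta = \varnothing$, and preserved by $\ustep$. Under that invariant your plan does go through: $\theta'(x)$ is a DHP, $\theta'(x)\delta' = \delta(x)$ is a matching problem, and \thmref{matchUnique} pins $\delta'$ on all of $\fv{E'}$, yielding exactly the size bounds in your case analysis. (Incidentally, your structural invariant is also too strong: after \uvar or \uffne the accumulated $\theta$ maps variables to arbitrary DHPs, not to templates whose variables are applied to distinct $\eta$-expanded bound variables; what is preserved, by \thmref{dhpu} and \lemref{dhpsubst}, and what deterministic matching needs, is merely that the images are DHPs.) In short: you correctly located the gap that the paper's own proof glosses over, but the unconditional pinning you promise is false, so your proof cannot be completed without weakening the statement to configurations reachable in \thmref{nondetComplete}.
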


The proofs of the preceding lemmata can be found in \appref{lemmata}.
We are now ready to establish nondeterministic completeness.

\begin{theorem}
\label{thm:nondetComplete}
Let $\delta \in \unifiers{E}$. There exists a derivation
$(E,\varnothing) \ustep^* (\varnothing,\theta)$ such that
$\theta \subsumes \delta\restr{\fv{E}}$.
\end{theorem}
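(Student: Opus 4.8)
The plan is to combine the progress result \lemref{nondetComplete} with the termination result \lemref{custepTerm} along the relation $\custep_W$ of \defref{custep}, instantiated with the fixed set $W = \fv{E}$. Before doing so I would perform a harmless preliminary reduction to an idempotent guiding unifier: since $\subsumes$ is transitive over a fixed variable set, it is enough to prove the statement for an idempotent unifier of $E$ that is at least as general as $\delta$ over $\fv{E}$, so we may assume without loss of generality that $\delta$ itself is idempotent. With this in place, the initial triple $(E,\varnothing,\delta)$ satisfies all the hypotheses of \defref{custep}: indeed $\delta \in \unifiers{E}$, $\delta$ is idempotent, and $\varnothing\delta = \delta = \delta\restr{W}$ holds trivially.

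Next I would build a maximal $\custep_W$-derivation starting from $(E,\varnothing,\delta)$. By \lemref{custepTerm} the relation $\custep_W$ is terminating, hence any such derivation is finite and ends in a configuration $(E^\star,\theta,\delta^\star)$ that admits no further $\custep_W$-step. The key claim is that $E^\star = \varnothing$. The crucial observation is that the three conditions carried along every $\custep_W$-step, namely that $\delta^\star$ is idempotent, $\delta^\star \in \unifiers{E^\star}$, and $\theta\delta^\star = \delta^\star = \delta\restr{W}$, are precisely the precondition of \defref{custep} (with reference unifier $\delta^\star$ and the same $W$, noting $\delta^\star = \delta\restr{W}$ over $W$). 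Therefore, if $E^\star \neq \varnothing$, \lemref{nondetComplete} would supply a further $\custep_W$-step, contradicting maximality. Consequently
\[
(E,\varnothing,\delta) \custep_W^{*} (\varnothing,\theta,\delta^\star).
\]

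Finally I would read off the conclusion. Projecting each $\custep_W$-step onto its first component (the underlying $\ustep$-step built into \defref{custep}) yields the successful derivation $(E,\varnothing) \ustep^* (\varnothing,\theta)$. The invariant at the terminal configuration gives $\theta\delta^\star = \delta\restr{W} = \delta\restr{\fv{E}}$, so $\delta^\star$ is a witness for $\theta \subsumes \delta\restr{\fv{E}}$; combined with the preliminary reduction and transitivity of $\subsumes$ this recovers the statement for the original $\delta$. The substance of the argument lives entirely in the middle step: termination forces the derivation to halt, while \lemref{nondetComplete} guarantees it can only halt at the empty equation set. Thus the main obstacle is not in this theorem itself but is packaged into \lemref{nondetComplete}, whose proof must, for every possible shape of a selected equation, simultaneously exhibit an applicable inference rule and a compatible extension of the guiding unifier — in particular, treating the flex-flex rules \uffe and \uffne correctly, where \lemref{ffneComplete} provides the required term.
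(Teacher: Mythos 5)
Your proposal is correct and follows essentially the same route as the paper's own proof: pass to an idempotent variant of $\delta$, start the triple $(E,\varnothing,\delta)$, use termination of $\custep_{\fv{E}}$ (\lemref{custepTerm}) together with the progress guarantee of \lemref{nondetComplete} to obtain a maximal derivation that must end in an empty equation set, and read off the $\ustep$-derivation and the witnessing substitution. Your explicit argument that the invariants of \defref{custep} re-establish the hypotheses of \lemref{nondetComplete} at every intermediate triple is exactly what the paper leaves implicit.
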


\begin{proof}
There exists an idempotent variant $\delta'$ of $\delta$ such that
$\delta' \subsumes \delta\restr{\fv{E}}$. Since
$\delta' \in \unifiers{E}$ and
$\varnothing\delta' = \delta'\restr{\fv{E}}$, by repeated
applications of \lemref{nondetComplete} as well as termination of
$\custep_{\fv{E}}$ (\lemref{custepTerm}) there must be a maximal
sequence of transformations
\[
(E,\varnothing,\delta') \custep_{\fv{E}} (E_1,\theta_1,\delta_1)
\custep_{\fv{E}} \cdots \custep_{\fv{E}} (E_k,\theta_k,\delta_k) =
(\varnothing,\theta_k,\delta_k)
\]
where the final equality follows from \lemref{nondetComplete}.
From the definition of $\custep_{\fv{E}}$ we obtain
$(E,\varnothing) \ustep^* (\varnothing,\theta_k)$,
$\delta' = \delta_k\restr{\fv{E}}$,
$\theta_k \subsumes \delta_k\restr{\fv{E}}$ and therefore also
$\theta_k \subsumes \delta' \subsumes \delta \restr{\fv{E}}$ as
desired. \qed
\end{proof}

From soundness and nondeterministic completeness we infer that
the substitutions produced by our unification procedure
constitute a complete set of unifiers.

\begin{theorem}
\label{thm:complete}
For any set of unordered pairs of \textup{DHPs} of equal type $E$ with
$\unifiers{E} \neq \varnothing$, The set
$\SET{\theta|_{\fv{E}} \mid (E,\varnothing) \ustep^* (\varnothing,\theta)}$
is a $\csu{E}$.
\end{theorem}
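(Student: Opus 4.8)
The plan is to read the statement off as an immediate corollary of soundness (\thmref{sound}) and nondeterministic completeness (\thmref{nondetComplete}), so the only work is to check the three defining conditions of a $\csu{E}$ from \defref{csu} for the set $U = \SET{\theta|_{\fv{E}} \mid (E,\varnothing) \ustep^* (\varnothing,\theta)}$. Two of them cost essentially nothing. Condition \enuref{csuDom} holds by construction: every element of $U$ is a substitution that has already been restricted to $\fv{E}$, so its domain is contained in $\fv{E}$. Condition \enuref{csuUnif} is exactly the content of \thmref{sound}, which guarantees $\theta|_{\fv{E}} \in \unifiers{E}$ for every successful derivation $(E,\varnothing) \ustep^* (\varnothing,\theta)$.

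The substance of condition \enuref{csuComplete} is delivered by \thmref{nondetComplete}: given any $\delta \in \unifiers{E}$, it produces a successful derivation $(E,\varnothing) \ustep^* (\varnothing,\theta)$ with $\theta \subsumes \delta\restr{\fv{E}}$, and the associated $\theta|_{\fv{E}}$ lies in $U$. The one gap I must bridge is that \thmref{nondetComplete} provides subsumption for the \emph{unrestricted} $\theta$, whereas $U$ contains only $\theta|_{\fv{E}}$; hence I need $\theta \subsumes \delta\restr{\fv{E}}$ to imply $\theta|_{\fv{E}} \subsumes \delta\restr{\fv{E}}$. I expect this to be the only step requiring an argument, and it is routine: picking a witness $\gamma$ with $\theta\gamma = \delta\restr{\fv{E}}$, I observe that for every $x \in \fv{E}$ the substitutions agree, $\theta|_{\fv{E}}(x) = \theta(x)$ (whether or not $x \in \dom{\theta}$), so that $(\theta|_{\fv{E}}\gamma)(x) = (\theta(x))\gamma = (\theta\gamma)(x) = \delta(x)$; thus the very same $\gamma$ also witnesses $\theta|_{\fv{E}} \subsumes \delta\restr{\fv{E}}$.

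Putting the three verifications together shows that $U$ is a $\csu{E}$; the hypothesis $\unifiers{E} \neq \varnothing$ moreover ensures $U$ is nonempty, since any $\delta \in \unifiers{E}$ forces, through condition \enuref{csuComplete}, at least one successful derivation. The genuinely hard part of completeness has already been discharged earlier—most of it in the construction of the relation $\custep_{\fv{E}}$ underlying \thmref{nondetComplete}—so here the only point demanding care is the bookkeeping around restriction to $\fv{E}$, which is why I single out that step rather than the three condition checks.
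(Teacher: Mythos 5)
Your proposal is correct and follows essentially the same route as the paper's proof: both verify the three conditions of \defref{csu} directly from \thmref{sound} and \thmref{nondetComplete}, with the only nontrivial point being that restricting $\theta$ to $\fv{E}$ preserves $\theta \subsumes \delta\restr{\fv{E}}$. The paper states this last implication without justification, whereas you spell out the witness argument explicitly, which is a harmless (and slightly more careful) elaboration of the same step.
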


\begin{proof}
Let $U = \SET{\theta|_{\fv{E}} \mid (E,\varnothing) \ustep^*
(\varnothing,\theta)}$. By \thmref{sound},
$U \subseteq \unifiers{E}$. Furthermore, \thmref{nondetComplete}
establishes that for every $\delta \in \unifiers{E}$ there exists
some $\theta|_{\fv{E}} \in U$ such that
$\theta \subsumes \delta \restr{\fv{E}}$ and thus also
$\theta|_{\fv{E}} \subsumes \delta \restr{\fv{E}}$. Hence, $U$ is a
$\csu{E}$ according to \defref{csu}. \qed
\end{proof}

\section{Minimality}
\label{sec:minimal}

In this section, we show that implementations of the unification
inference system which adhere to the mild restrictions given in the
following definition are guaranteed to produce minimal complete
sets of unifiers if successful.

\begin{definition}
A \emph{strategy} $\fS$ is a mapping from unification pairs
$(E,\theta)$ with $E \neq \varnothing$ to a particular element
$\fS((E,\theta)) = u \approx v \in E$ on which an inference rule
should be applied. In particular,
$(E,\theta) \ustep_{\fS} (E',\theta')$ denotes a step in the
inference system restricted to $\fS$. A derivation with
$\ustep_{\fS}$ is called \emph{locally optimal} if \urem and \uvar
are given precedence to all other inference rules. Finally, a
derivation with $\ustep_{\fS}$ is called
\emph{fresh-var-deterministic} if fresh variable names are computed
in a deterministic way.
\end{definition}

\begin{lemma}
\label{lem:diffStepIncomparableUnif}
Let $\fS$ be a strategy. If $(E,\theta) \ustep_{\fS} (E_1,\theta_1)$
and $(E,\theta) \ustep_{\fS} (E_2,\theta_2)$ are locally optimal and
fresh-var-deterministic and $\theta_1 \neq \theta_2$ then
$\theta_1 \perp \theta_2$.
\end{lemma}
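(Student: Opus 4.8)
The plan is to first show that, because the strategy $\fS$ fixes the selected pair $s \approx t = \fS((E,\theta))$ on which both steps operate, two distinct outcomes $\theta_1 \neq \theta_2$ can only originate from a \emph{flex-rigid} pair, and then to exploit the fact that the two competing partial bindings assign the shared variable images with different heads.

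I would begin with a case analysis on the selected pair. Writing $s = \vec[k]{x}.h_1(\vec{s})$ and $t = \vec[k]{x}.h_2(\vec{t})$, local optimality means that \urem and \uvar preempt every other rule whenever they apply, and each of them has a single outcome, so they yield $\theta_1 = \theta_2$. If both heads lie in $\xF \cup \SEt{\vec[k]{x}}$, the only applicable rule is \udec (and only when the heads coincide), which is deterministic. The two flex-flex configurations are governed by the single rules \uffe and \uffne, whose bindings are determined by the pair up to the name of the fresh variable $H$, hence pinned by fresh-var-determinism. Consequently, under the stated assumptions $\theta_1 \neq \theta_2$ forces $s \approx t$ to be flex-rigid, say $\vec[k]{x}.F(\vec{s}) \approx \vec[k]{x}.h(\vec[m]{t})$ with $F \in \xV \setminus \SEt{\vec[k]{x}}$ and $h \in \xF \cup \SEt{\vec[k]{x}}$, and both steps to be instances of \uimt or \uprj binding the same variable $F$ via $\delta_i = \SET{F \mapsto u_i}$ with $u_i$ a partial binding of the type of $F$.

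Since a partial binding is uniquely determined by its type and its head up to renaming of its fresh variables, and fresh-var-determinism fixes those names, $u_1 \neq u_2$ implies $\hd{u_1} \neq \hd{u_2}$: an imitation binding has a function symbol as head whereas a projection binding has one of its own bound variables as head, and two distinct projections use distinct bound variables. The derivation maintains the invariant $\dom{\theta} \cap \fv{E} = \varnothing$—each applied binding removes the bound variable from $E$ and introduces only fresh variables—so $F \notin \dom{\theta}$ and hence, from the definition of substitution composition, $\theta_i(F) = (\theta\delta_i)(F) = u_i$. To conclude $\theta_1 \perp \theta_2$, I would assume toward a contradiction a common instance, i.e.\ substitutions $\rho_1,\rho_2$ with $\theta_1\rho_1 = \theta_2\rho_2 =: \gamma$. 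Evaluating at $F$ and using $F \in \dom{\theta_i}$ gives $\gamma(F) = \theta_1(F)\rho_1 = u_1\rho_1$ and likewise $\gamma(F) = u_2\rho_2$, so $u_1\rho_1 = u_2\rho_2$. The head of each $u_i$ is a function symbol or a variable bound inside $u_i$, and neither is affected by applying $\rho_i$ (after the usual $\alpha$-renaming of the binder away from $\dom{\rho_i}$), so $\hd{u_i\rho_i} = \hd{u_i}$. Aligning the identically typed binder prefixes of the two $\alpha$-equal terms then forces $\hd{u_1} = \hd{u_2}$, contradicting their distinctness; hence no common instance exists and $\theta_1 \perp \theta_2$.

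The main obstacle is the reduction step: one must verify that every rule other than \uprj (together with its interaction with \uimt) has a deterministic outcome under local optimality and fresh-var-determinism, which requires carefully checking the flex-flex rules \uffe and \uffne, where the argument lists of the fresh head are extracted from the pair. The subsequent orthogonality argument is then a short head-comparison, whose only delicate point is that the head of a partial binding survives instantiation—which holds precisely because that head is a function symbol or a variable bound within the binding itself.
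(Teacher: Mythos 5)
Your proof follows essentially the same route as the paper's: a case analysis showing that under local optimality and fresh-var-determinism every rule other than \uimt/\uprj has a unique outcome, so the divergence must come from a flex-rigid pair, followed by the observation that two distinct partial bindings for the same variable $F$ have distinct rigid heads (a function symbol or one of the binding's own bound variables), which survive instantiation and therefore preclude any common instance. The one place you go beyond the paper is the explicit appeal to the invariant $\dom{\theta} \cap \fv{E} = \varnothing$ to justify $\theta_i(F) = u_i$; strictly speaking this invariant is not among the lemma's hypotheses (it holds for pairs reachable from $(E,\varnothing)$, which is where the lemma is applied), but the paper's proof silently relies on the same fact, so this is added care rather than a divergence.
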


\begin{proof}
Let $(E,\theta) \ustep_{\fS} (E_1,\theta_1)$,
$(E,\theta) \ustep_{\fS} (E_2,\theta_2)$ such that
$\theta_1 \neq \theta_2$ and $\fS((E,\theta)) = s \approx t$ where
$s = \vec[k]{x}.h_1(\vec{s})$ and $t = \vec[k]{x}.h_2(\vec[m]{t})$.
If $s = t$ then both steps are the same applications of \urem which
contradicts our assumption $\theta_1 \neq \theta_2$. Hence,
$s \neq t$. If $s = \vec[k]{x}.F(\vec[k]{x}) \approx v$ and
$F \notin \fv{v}$ then both steps are the same applications of
\uvar which again contradicts our assumption
$\theta_1 \neq \theta_2$. If $h_1 = h_2$ then at most one of \udec
or \uffe is applicable, and both have a deterministic outcome
because the derivations are fresh-var-deterministic. Therefore, this
case is impossible by our assumption $\theta_1 \neq \theta_2$.
Hence, $h_1 \neq h_2$. If
$h_1, h_2 \in \xV \setminus \SEt{\vec[k]{x}}$ then only \uffne is
applicable which has a deterministic outcome because the derivations
are fresh-var-deterministic. Therefore, this case is also impossible
by our assumption $\theta_1 \neq \theta_2$. Since there is no rule
for $h_1 \neq h_2$ where $h_1, h_2 \in \xF \cup \SEt{\vec[k]{x}}$,
without loss of generality,
$h_1 : \tau \in \xV \setminus \SEt{\vec[k]{x}}$ and
$h_2 \in \xF \cup \SEt{\vec[k]{x}}$. In this case, \uimt or \uprj
are applicable. Hence, $\theta_1 = \theta\SET{F \mapsto \vec{y}.u}$
and $\theta_2 = \theta\SET{F \mapsto \vec{y}.v}$ such that
$\vec{y}.u,\vec{y}.v \in \ib{\tau} \cup \pb{\tau}$ and $u \neq v$.
By definition of $\ib{\tau} \cup \pb{\tau}$ and the
assumption that the derivations are fresh-var deterministic,
$\hd{\vec{y}.u} \neq \hd{\vec{y}.v}$ and
$\hd{\vec{y}.u},\hd{\vec{y}.v} \in \xF \cup \SEt{\vec{y}}$, so there
do not exist substitutions $\mu_1$, $\mu_2$ such that
$(\vec{y}.u)\mu_1 = (\vec{y}.v)\mu_2$. Thus, there does not exist a
substitution $\gamma$ such that $\theta_1 \subsumes \gamma$ and
$\theta_2 \subsumes \gamma$, so $\theta_1 \perp \theta_2$ as
desired. \qed
\end{proof}

The following lemma is an easy observation about the inference system
which follows by a straightforward induction argument together with
transitivity of $\subsumes$.

\begin{lemma}
\label{lem:ustepSubstLE}
If $(E,\theta) \ustep^* (E',\theta')$ then $\theta \subsumes \theta'$.
\end{lemma}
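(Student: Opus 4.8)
The plan is to reduce the multi-step claim to a single-step observation and then induct on the length of the derivation. First I would observe that every inference rule in \defref{unif} transforms a unification pair $(E,\theta)$ into one of the form $(E',\theta\delta)$ for some substitution $\delta$: the rules \urem and \udec leave the second component unchanged (so $\delta$ is the identity substitution), while \uvar, \uimt, \uprj and \uffe each replace $\theta$ by $\theta\SET{F \mapsto u}$, and \uffne replaces $\theta$ by $\theta\SET{F \mapsto u, G \mapsto v}$. In every case the resulting substitution is of the shape $\theta' = \theta\delta$, so taking the witness $\gamma = \delta$ gives $\theta\gamma = \theta'$ and hence $\theta \subsumes \theta'$ directly from the definition of $\subsumes$.

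Next I would prove the statement by induction on the number $n$ of steps in the derivation $(E,\theta) \ustep^* (E',\theta')$. The base case $n = 0$ gives $\theta' = \theta$, and $\theta \subsumes \theta$ holds via the identity substitution. For the step case, I would split the derivation as $(E,\theta) \ustep^* (E'',\theta'') \ustep (E',\theta')$. The induction hypothesis yields $\theta \subsumes \theta''$, and the single-step observation above yields $\theta'' \subsumes \theta'$. Since transitivity of $\subsumes$ follows from associativity of substitution composition (as recorded in \secref{preliminaries}), I conclude $\theta \subsumes \theta'$.

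There is essentially no substantive obstacle here: the only point requiring care is the single-step observation, which amounts to a uniform inspection of the seven inference rules to confirm that each updates the substitution component by postcomposition with some $\delta$. Once that is in place, the result is immediate from the induction together with the already-established transitivity of $\subsumes$, matching the remark that this lemma follows by a straightforward induction argument.
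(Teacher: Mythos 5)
Your proof is correct and takes essentially the same route as the paper, which offers no detailed proof but explicitly remarks that the lemma ``follows by a straightforward induction argument together with transitivity of $\subsumes$.'' Your single-step observation (each rule updates the substitution component by postcomposition with some $\delta$, so $\theta \subsumes \theta\delta$ with witness $\gamma = \delta$) combined with induction on the derivation length and transitivity of $\subsumes$ is exactly the intended argument.
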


\begin{theorem}
\label{thm:diffRunIncomparableUnif}
Let $\fS$ be a strategy. If
$(E,\varnothing) \ustep_{\fS}^* (\varnothing,\theta_1)$ and
$(E,\varnothing) \ustep_{\fS}^* (\varnothing,\theta_2)$ are locally
optimal and fresh-var-deterministic and $\theta_1 \neq \theta_2$ then
$\theta_1 \perp \theta_2$.
\end{theorem}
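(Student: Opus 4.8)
The plan is to prove that two distinct successful runs under a fixed strategy produce orthogonal unifiers. The key idea is to locate the first point where the two runs diverge and then invoke the single-step result, Lemma~\ref{lem:diffStepIncomparableUnif}, to obtain orthogonality of the intermediate substitutions, finally propagating this orthogonality to the end of both runs using Lemma~\ref{lem:ustepSubstLE}.

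\textbf{Setting up the divergence point.} First I would consider the two derivations
\[
(E,\varnothing) \ustep_{\fS}^* (\varnothing,\theta_1),
\qquad
(E,\varnothing) \ustep_{\fS}^* (\varnothing,\theta_2)
\]
and compare them step by step. Since both runs start from the same pair $(E,\varnothing)$ and $\theta_1 \neq \theta_2$, there must be a first position at which the intermediate substitutions differ. Because both derivations are governed by the same strategy $\fS$, up to that point they act on the same selected equation $\fS((E',\theta')) = s \approx t$ in a common intermediate pair $(E',\theta')$, reached by a shared prefix $(E,\varnothing) \ustep_{\fS}^* (E',\theta')$. Formally, I would argue by induction on the length of the shared prefix that either the two runs agree on their intermediate substitutions (in which case, by fresh-var-determinism, they must continue to agree, contradicting $\theta_1 \neq \theta_2$), or there is a first diverging step yielding $(E',\theta') \ustep_{\fS} (E_1',\theta_1')$ and $(E',\theta') \ustep_{\fS} (E_2',\theta_2')$ with $\theta_1' \neq \theta_2'$.

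\textbf{Applying orthogonality at the divergence.} At this first diverging step both moves operate on the same selected equation (guaranteed by $\fS$) and are themselves locally optimal and fresh-var-deterministic, so Lemma~\ref{lem:diffStepIncomparableUnif} applies and gives $\theta_1' \perp \theta_2'$. I would then continue each run to completion: $(E_1',\theta_1') \ustep_{\fS}^* (\varnothing,\theta_1)$ and $(E_2',\theta_2') \ustep_{\fS}^* (\varnothing,\theta_2)$. By Lemma~\ref{lem:ustepSubstLE} we have $\theta_1' \subsumes \theta_1$ and $\theta_2' \subsumes \theta_2$.

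\textbf{Propagating orthogonality to the end.} The final step is to show that $\theta_1' \perp \theta_2'$ together with $\theta_1' \subsumes \theta_1$ and $\theta_2' \subsumes \theta_2$ yields $\theta_1 \perp \theta_2$. This follows by unfolding the definition of orthogonality: if some $\gamma$ satisfied $\theta_1 \subsumes \gamma$ and $\theta_2 \subsumes \gamma$, then by transitivity of $\subsumes$ (which holds by associativity of substitution composition, as noted after Lemma~\ref{lem:substAppAssoc}) we would get $\theta_1' \subsumes \gamma$ and $\theta_2' \subsumes \gamma$, contradicting $\theta_1' \perp \theta_2'$. Hence $\theta_1 \perp \theta_2$ as desired.

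\textbf{Main obstacle.} The bulk of the routine work lies in cleanly formalizing the ``first diverging step'' argument, namely showing that as long as the intermediate substitutions coincide, fresh-var-determinism forces the entire subsequent branching structure to coincide as well, so that any genuine difference in the final substitutions must originate from exactly one diverging step where both branches act on the $\fS$-selected equation. The orthogonality propagation step itself is a short calculation once transitivity of $\subsumes$ is in hand, and the single-step orthogonality is already fully handled by Lemma~\ref{lem:diffStepIncomparableUnif}.
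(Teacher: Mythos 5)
Your proof is correct and follows essentially the same route as the paper: locate the first point of divergence, apply Lemma~\ref{lem:diffStepIncomparableUnif} there to get $\theta_1' \perp \theta_2'$, lift both sides via Lemma~\ref{lem:ustepSubstLE}, and conclude by transitivity of $\subsumes$. Your additional remarks on formalizing the divergence point via induction and fresh-var-determinism only spell out what the paper leaves implicit.
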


\begin{proof}
Since $\theta_1 \neq \theta_2$, the two derivations are not empty and
there must be a first point of divergence, i.e.,
$(E,\varnothing) \ustep_{\fS}^* (E',\theta')$,
$(E',\theta') \ustep_{\fS} (E_1',\theta_1')$,
$(E',\theta') \ustep_{\fS} (E_2',\theta_2')$,
$(E_1',\theta_1') \ustep_{\fS}^* (\varnothing,\theta_1)$ and
$(E_2',\theta_2') \ustep_{\fS}^* (\varnothing,\theta_2)$
such that $\theta_1' \neq \theta_2'$. From
\lemref{diffStepIncomparableUnif} we obtain
$\theta_1' \perp \theta_2'$. Furthermore, \lemref{ustepSubstLE}
yields $\theta_1' \subsumes \theta_1$ as well as
$\theta_2' \subsumes \theta_2$. If there is a substitution $\gamma$ such
that $\theta_1 \subsumes \gamma$ and $\theta_2 \subsumes \gamma$ then
by transitivity of $\subsumes$ also $\theta_1' \subsumes \gamma$ and
$\theta_2' \subsumes \gamma$ which contradicts $\theta_1' \perp \theta_2'$.
Hence, $\theta_1 \perp \theta_2$ as desired. \qed
\end{proof}

\begin{theorem}
\label{thm:mincomplete}
For any set of unordered pairs of \textup{DHPs} of equal type $E$ with
$\unifiers{E} \neq \varnothing$ and strategies $\fS$, The set
$\SET{\theta|_{\fv{E}} \mid (E,\varnothing) \ustep_{\fS}^*
(\varnothing,\theta)}$ restricted to locally optimal and
fresh-var-deterministic derivations is a $\mcsu{E}$.
\end{theorem}

\begin{proof}
Let $U = \SET{\theta|_{\fv{E}} \mid (E,\varnothing) \ustep_{\fS}^*
(\varnothing,\theta)}$. Since \lemref{nondetComplete} and
therefore \thmref{nondetComplete} are strategy-agnostic, produce
locally optimal derivations and do not depend on the names of fresh
variables, \thmref{complete} yields that $U$ is a $\csu{E}$.
From \thmref{diffRunIncomparableUnif} we obtain that for all
$\theta_1, \theta_2 \in U$ with $\theta_1 \neq \theta_2$ we have
$\theta_1 \perp \theta_2$. Hence, $U$ is a $\mcsu{E}$ according to
\defref{csu}. \qed
\end{proof}

A prototype implementation of the inference system which follows a
fixed strategy and only produces locally optimal and
fresh-var-deterministic derivations is available on GitHub.%
\footnote{\url{https://github.com/niedjoh/hrstk}}
Regarding nontermination of the inference system
as shown in \exaref{infinite2}, the strategy employed by our
prototype implementation postpones processing of equations of the form
$\vec[k]{x}.F(\vec{s}) \approx t$ where $F \in \fv{t}$ as long as
possible and simply gives up whenever all remaining equations
are of this form in order to avoid infinite behavior.
Note that due to our usage of a $\beta\eta$-free formulation of types,
our prototype can do without an implementation of $\beta$-reduction.

\section{Conclusion}
\label{sec:conclusion}

We introduced a sound and complete inference system to compute
complete sets of unifiers of higher-order unification problems
containing DHPs. Furthermore, we showed that suitable restrictions of
implementations guarantee minimal complete sets of unifiers if
successful. Our inference system is closely related to the one given
in \cite{SG89} for full higher-order unification, but with notable
differences such as most general solutions to flex-flex
pairs and predictable head symbols in projections.
Furthermore, DHPs are closely related to the terms used in FCU
which is decidable and unitary. However, these nice properties of FCU
are achieved by a global restriction on the unification problem, so it
can be hard to predict when it is actually usable as a subroutine in
practice. While FCU can solve unification goals directly, in
unification of DHPs one has to resort to partial bindings as in full
higher-order unification to keep the problem
manageable. Nevertheless, FCU can be simulated by our approach: If
there exists a most general unifier of a FCU problem, there is a
corresponding fresh-var deterministic and finite derivation in our
inference system by \thmref{complete}. Since the result is a most
general unifier of the problem, \thmref{mincomplete} ensures that
this is the only possible derivation.

The existence of infinite minimal complete sets
of unifiers renders a termination proof of our derivation relation
impossible and leaves decidability of the unification problem as a
nontrivial open
question. Note that an exact way of detecting situations which require an
infinite minimal complete set of unifiers would not be enough since
the inference system would still be infinitely branching without a
clever representation of infinite sets of unifiers. Since it
is not obvious that these situations only arise at the end of the
unification process, every rule in the inference system potentially
has to deal with abstract representation of infinite sets of terms.
The idea to use
(regular) grammars as stated in the conclusion of \cite{VBN21} seems to
be a promising step in that direction.

\begin{credits}
\subsubsection{\ackname}
We thank the anonymous reviewers for their valuable comments
which improved the presentation of the paper.

\subsubsection{\discintname}
The authors have no competing interests to declare that are
relevant to the content of this article.
\end{credits}

\bibliographystyle{splncs04}
\bibliography{references}

\appendix

\section{Proof of \thmref{dhpu}}
\label{app:dhpu}

We start by proving transitivity of $\subtEeq$ for a special case.

\begin{lemma}
\label{lem:subtEtrans}
If $\vec{x}.s \subtEeq \vec{x}.t$, $\vec{x}.t \subtEeq \vec{x}.u$ and
$\varnothing \neq \fv{u} \subseteq \SEt{\vec{x}}$
then $\vec{x}.s \subtEeq \vec{x}.u$.
\end{lemma}

\begin{proof}
Let $\vec{x}.t = \vec{x},\vec[k_1]{y}.h_1(\vec[m_1]{t},\vec[k_1]{y}\cf)$
and $\vec{x}.u = \vec{x},\vec[k_2]{y}.h_2(\vec[m_2]{u},\vec[k_2]{y}\cf)$.
By definition of $\subtEeq$, there are $n_1 \geq n$ and terms
$\vec[n_1]{x}.t_{m_1+1},\dots,\vec[n_1]{x}.t_{m_1+k_1}$ such that
$\vec{x}.s \subteq \vec[n_1]{x}.h_1(\vec[m_1+k_1]{t})$.
Furthermore, there is some $n_2$ and terms
\[
\vec{x},\vec[k_1]{y},\vec[n_2]{z}.u_{m_2+1},\dots,\vec{x},
\vec[k_1]{y},\vec[n_2]{z}.u_{m_2+k_2}
\]
such that
$\vec{x}.t \subteq
\vec{x},\vec[k_1]{y},\vec[n_2]{z}.h_2(\vec[m_2+k_2]{u})$. If
$\vec{x}.t = \vec{x},\vec[k_1]{y},\vec[n_2]{z}.h_2(\vec[m_2+k_2]{u})$,
we are done immediately. Otherwise, by
$\varnothing \neq \fv{u} \subseteq \SEt{\vec{x}}$, we know that
$\vec{x},\vec[k_1]{y}.t_j \subteq
\vec{x},\vec[k_1]{y},\vec[n_2]{z}.h_2(\vec[m_2+k_2]{u})$ for some
$1 \leq j \leq m_1$. Since $\vec{x}.t$ is an expanded term, we can
remove $\vec[k_1]{y}$ to obtain
$\vec{x}.t_j \subteq
\vec{x},\vec[n_2]{z}.h_2(\vec[m_2+k_2]{u})$. Hence, also
$\vec[n_1]{x}.h_1(\vec[m_1+k_1]{t}) \subteq
\vec[n_1]{x},\vec[n_2]{z}.h_2(\vec[m_2+k_2]{u})$. By transitivity
of $\subteq$, we conclude that
$\vec{x}.s \subteq \vec[n_1]{x},\vec[n_2]{z}.h_2(\vec[m_2+k_2]{u})$
and thus $\vec{x}.s \subtEeq \vec{x}.u$. \qed
\end{proof}

Next we show that $\subtEeq$ enjoys a special kind of closedness
under substitutions whose image consists expanded terms.

\begin{lemma}
\label{lem:subtEclosedUnderSubst}
Let $\vec{y}.u$ and $\vec{y}.v$ be terms where the latter one is expanded
as well as a substitution
$\theta = \SET{\vec{y} \mapsto \vec{s}}$ where $\vec[k]{x}.s_i$ is an
expanded term for all $1 \leq i \leq n$. If
$\vec{y}.u \subtEeq \vec{y}.v$ then
$\vec[k]{x}.u\theta \subtEeq \vec[k]{x}.v\theta$.
\end{lemma}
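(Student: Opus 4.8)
The plan is to first determine the shape of $\vec[k]{x}.v\theta$, verify that it is expanded, and then locate the required occurrence inside $\vec[k]{x}.u\theta$ by induction on the subterm occurrence witnessing the hypothesis. Since $\vec{y}.v$ is expanded we may write $\vec{y}.v = \vec{y},\vec{z}.h(\vec[m]{v},\vec{z}\cf)$ with $(\fv{\vec[m]{v}} \cup \SET{h}) \cap \SEt{\vec{z}} = \varnothing$, where $\vec{z}$ has length $p$, and unfold $\vec{y}.u \subtEeq \vec{y}.v$ to a witness $\vec{y}.u \subteq \vec{y},\vec{y}'.h(\vec[m+p]{v})$ in which $\vec{y}'$ consists of bound variables of $u$ and $v_{m+1},\dots,v_{m+p}$ are arbitrary. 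The one computation used throughout is the following consequence of expandedness: if $\vec[k]{x}.s = \vec[k]{x},\vec{w}.g(\vec{a},\vec{w}\cf)$ is expanded and $\vec{b}$ matches the arity of $\vec{w}$, then $g(\vec{a},\vec{w}\cf)\SET{\vec{w} \mapsto \vec{b}} = g(\vec{a},\vec{b})$, since $\SEt{\vec{w}} \cap (\fv{\vec{a}} \cup \SET{g}) = \varnothing$ leaves $\vec{a}$ untouched and reduces each $w_i\cf$ to $b_i$.

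I distinguish whether the head $h$ of $v$ is substituted. If $h \notin \SEt{\vec{y}}$ then $v\theta = \vec{z}.h(\vec[m]{v}\theta,\vec{z}\cf)$, so $\vec[k]{x}.v\theta = \vec[k]{x},\vec{z}.h(\vec[m]{v}\theta,\vec{z}\cf)$ is expanded (the $\vec{z}$ are fresh and $\theta$ introduces none of them), and I set $\rho = h(\vec[m]{v}\theta,v_{m+1}\theta,\dots,v_{m+p}\theta)$. If $h = y_j \in \SEt{\vec{y}}$ with $s_j = \vec{w}.g(\vec{a},\vec{w}\cf)$, the auxiliary computation (whose $\vec{w}$ has arity $m+p$) gives $\vec[k]{x}.v\theta = \vec[k]{x},\vec{z}.g(\vec{a},\vec[m]{v}\theta,\vec{z}\cf)$, again expanded, and I set $\rho = g(\vec{a},\vec[m]{v}\theta,v_{m+1}\theta,\dots,v_{m+p}\theta)$. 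In both cases the leading arguments of $\rho$ coincide with the non-trailing arguments of the pattern of $\vec[k]{x}.v\theta$, so it suffices to prove $\vec[k]{x}.u\theta \subteq \vec[k]{x},\vec{y}''.\rho$ for some extension $\vec{y}''$: this directly witnesses $\vec[k]{x}.u\theta \subtEeq \vec[k]{x}.v\theta$ with fillings $v_{m+1}\theta,\dots,v_{m+p}\theta$.

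This last statement I prove by induction on the occurrence $\vec{y}.u \subteq \vec{y},\vec{y}'.h(\vec[m+p]{v})$. If the occurrence is at the root, i.e.\ $\vec{y}' = ()$ and $u = h(\vec[m+p]{v})$, then $u\theta = \rho$ by the computation above and the claim holds by reflexivity of $\subteq$. Otherwise $u = h_u(u_1,\dots,u_q)$ and the occurrence lies in some $u_i$, so $\vec{y}.u_i \subteq \vec{y},\vec{y}'.h(\vec[m+p]{v})$. If $h_u \notin \SEt{\vec{y}}$ then $u\theta = h_u(u_1\theta,\dots,u_q\theta)$, and the induction hypothesis places the occurrence inside $u_i\theta$; transitivity of $\subteq$ then yields the claim. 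If $h_u = y_{j'} \in \SEt{\vec{y}}$ with $s_{j'} = \vec{w}'.g'(\vec{a}',\vec{w}'\cf)$, the auxiliary computation gives $u\theta = g'(\vec{a}',u_1\theta,\dots,u_q\theta)$, so $u_i\theta$ is again a direct argument and the induction hypothesis together with transitivity of $\subteq$ applies.

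The delicate point is precisely the substituted-head step of the induction: I must guarantee that applying $\theta$ neither deletes nor buries the argument $u_i$ carrying the occurrence. This is exactly what expandedness of the $s_i$ secures through the auxiliary computation, since each formal parameter occurs exactly once and in trailing, $\eta$-expanded position, so $\beta$-reduction merely relocates $u_1\theta,\dots,u_q\theta$ as the trailing arguments of the head of $s_{j'}$ and leaves every embedded subterm intact; without expandedness a parameter could be dropped or pushed underneath further arguments, destroying the occurrence. A secondary, purely bureaucratic point is tracking the auxiliary binders $\vec{y}'$ (and any leading binders of the $u_i$) across the substitution, which the standard $\alpha$-renaming convention keeping $\SEt{\vec{z}}$ and the extra binders disjoint from $\dom{\theta} \cup \fv{\theta}$ handles.
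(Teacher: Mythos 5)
Your proof is correct and follows essentially the same route as the paper's: structural induction along the subterm occurrence in $u$, with a case analysis on whether the head encountered lies in $\dom{\theta}$, where expandedness of the $s_i$ guarantees that substitution merely places the arguments directly under the head of $s_i$ so the occurrence survives. Your reorganization (fixing the filled witness $\rho$ up front and then inducting on the plain relation $\subteq$ rather than on $\subtEeq$) and your explicit verification that $\vec[k]{x}.v\theta$ is expanded are presentational refinements of, not departures from, the paper's argument.
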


\begin{proof}
We proceed by induction on $\vec{y}.u$. Let
$\vec{y}.u = \vec[n']{y}.h_1(\vec[m_1]{u})$ and
$\vec{y}.v = \vec{y},\vec[k]{z}.h_2(\vec[m_2]{v},\vec[k]{z}\cf)$. By
definition, there exist $n'' \geq n'$ and $k$ terms written as
$\vec[n'']{y}.v_{m_2+1},\dots,\vec[n'']{y}.v_{m_2+k}$ such that
$\vec{y}.u \subteq \vec[n'']{y}.h_2(\vec[m_2+k]{v})$.
Without loss of generality,
$\SEt{y_{n+1},\dots,y_{n''}}$ is fresh for $\theta$.
If the two terms are equal then this
is preserved by substitution application, so
$\vec[k]{x}.u\theta \subtEeq \vec[k]{x}.v\theta$ as desired. Otherwise,
$\vec{y}.u_i \subteq \vec[n'']{y}.h_2(\vec[m_2+k]{v})$ for some
$1 \leq i \leq m_1$ and the induction hypothesis yields
$\vec[k]{x}.u_i\theta \subtEeq \vec[k]{x}.v\theta$. If
$h_1 \notin \dom{\theta}$ then
$\vec[k]{x}.u\theta \subtEeq \vec[k]{x}.v\theta$. Otherwise, let
$h_1 = y_i$ for some $1 \leq i \leq n$ and
$\vec{x}.s_i =
\vec{x},\vec[m_1]{z}.h_i(\vec[l]{{s_i}},\vec[m_1]{z}\cf)$. Then,
$\vec[k]{x}.u\theta =
\vec[k]{x}.h_i(\vec[l]{{s_i}},\vec[m_1]{u}\theta)$, so we
again have $\vec[k]{x}.u\theta \subtEeq \vec[k]{x}.v\theta$. \qed
\end{proof}

The following two lemmata incrementally build up to the proof
of \lemref{dhpsubst}, the remaining auxiliary result for \thmref{dhpu}.

\begin{lemma}
\label{lem:dhpVarArgListSubst}
If $\vec[k]{\vec[m]{y},\vec[l]{z}.u}$ and $\vec[m]{\vec{x}.s}$ are \textup{DHP}
var-arg lists then
\[\vecp[k]{\vec{x},\vec[l]{z}.u}{\SEt{\vec[m]{y} \mapsto \vec[m]{s}}}\]
is a \textup{DHP} var-arg list.
\end{lemma}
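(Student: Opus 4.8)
The plan is to establish the three conditions of \defref{dhp} for the list $\vecp[k]{\vec{x}.u}{\theta}$, where $\theta = \SET{\vec[m]{y} \mapsto \vec[m]{s}}$; that is, I show that each $\vec{x}.u_i\theta$ satisfies \enuref{dhp-var}, \enuref{dhp-exp} and \enuref{dhp-subt} with respect to the binder $\vec{x}$. The hypotheses give me, for every $1 \leq i \leq k$, that $\vec[m]{y}.u_i$ is expanded with $\varnothing \neq \fv{u_i} \subseteq \SEt{\vec[m]{y}}$ and that the $\vec[m]{y}.u_i$ are pairwise $\nsubtEeq$, and symmetrically, for every $1 \leq l \leq m$, that $\vec{x}.s_l$ is expanded with $\varnothing \neq \fv{s_l} \subseteq \SEt{\vec{x}}$ and that the $\vec{x}.s_l$ are pairwise $\nsubtEeq$.

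Condition \enuref{dhp-var} I would settle by tracking free variables through $\theta$: from $\fv{u_i} \subseteq \SEt{\vec[m]{y}}$ and $\fv{s_l} \subseteq \SEt{\vec{x}}$ hereditary substitution gives $\fv{u_i\theta} \subseteq \SEt{\vec{x}}$, and non-emptiness follows since $u_i$ contains some $y_l$ whose image $s_l$ has nonempty, bound-variable free variables that are untouched by the $\beta$-reductions performed when applying $\theta$. Condition \enuref{dhp-exp} is where the expandedness assumptions on the $s_l$ enter; since expandedness is a purely top-level property, I do a case split on the head $h \in \xF \cup \SEt{\vec[m]{y}}$ of the expanded term $\vec[m]{y}.u_i = \vec[m]{y},\vec[a]{p}.h(\vec{e},\vec[a]{p}\cf)$. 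If $h \in \xF$, then $\theta$ affects neither $h$ nor the trailing binders $\vec[a]{p}$, so $\vec{x}.u_i\theta = \vec{x},\vec[a]{p}.h(\vec{e}\theta,\vec[a]{p}\cf)$, which is expanded because $\fv{\vec{e}\theta} \subseteq \SEt{\vec{x}}$ is disjoint from $\SEt{\vec[a]{p}}$. If $h = y_l$, then $\vec{x}.u_i\theta$ arises from the expanded term $\vec{x}.s_l$ by instantiating the binders corresponding to the leading arguments $\vec{e}$ with $\vec{e}\theta$ and renaming those corresponding to the trailing $\eta$-arguments to $\vec[a]{p}$; as the instantiating terms have no free occurrence of any $\vec[a]{p}$, the trailing-$\eta$ suffix of $\vec{x}.s_l$ is inherited and $\vec{x}.u_i\theta$ is expanded.

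The substantial part is condition \enuref{dhp-subt}, i.e.\ $\vec{x}.u_i\theta \nsubtEeq \vec{x}.u_j\theta$ for $i \neq j$. I would argue by contradiction: assuming $\vec{x}.u_i\theta \subtEeq \vec{x}.u_j\theta$, I locate inside $u_i\theta$ the occurrence witnessing that $\vec{x}.u_j\theta$ is an expanded subterm and ask where it originates in the skeleton of $u_i$. Either it sits at a position of $u_i$ whose head is a function symbol or a trailing binder, in which case I reflect the embedding back through $\theta$---using that $\theta$ is injective on expanded terms, which is precisely determinism of matching for the \textup{DHP} $\vec{x}.G(\vec[m]{s})$ with a fresh $G$ (\thmref{matchUnique})---to obtain $\vec[m]{y}.u_i \subtEeq \vec[m]{y}.u_j$, contradicting the incomparability of $\vec[m]{y}.u_i$ and $\vec[m]{y}.u_j$; or the occurrence is absorbed into one of the inserted copies of some $s_a$, in which case \lemref{subtEclosedUnderSubst} and transitivity (\lemref{subtEtrans}) turn it into an embedding $\vec{x}.s_a \subtEeq \vec{x}.s_b$ between members of the second list, contradicting their incomparability when $a \neq b$, while $a = b$ collapses the assumed strict embedding to $u_i\theta = u_j\theta$ and hence, by injectivity, to $u_i = u_j$, against $i \neq j$.

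I expect the reflection step for \enuref{dhp-subt} to be the main obstacle. The forward closure \lemref{subtEclosedUnderSubst} runs in the opposite direction and cannot simply be contraposed, and in general substitution can create new expanded-subterm relationships; ruling this out must genuinely exploit that the $\vec{x}.s_l$ are expanded and pairwise $\nsubtEeq$ (equivalently, that matching against $\vec{x}.G(\vec[m]{s})$ is deterministic). The delicate bookkeeping lies in aligning the witnessing occurrence in $u_i\theta$ with the skeleton of $u_i$ and in correctly treating the nested situation where a $y_l$ is applied to arguments that themselves mention further variables from $\vec[m]{y}$.
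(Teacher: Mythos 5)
Your treatment of conditions \enuref{dhp-var} and \enuref{dhp-exp} of \defref{dhp} is correct and matches the paper's (the paper is terser but uses the same capture-avoidance argument), and your observation that injectivity of $\theta = \SET{\vec[m]{y} \mapsto \vec[m]{s}}$ on terms with free variables among $\vec[m]{y}$ is an instance of \thmref{matchUnique} is sound --- the paper invokes the same fact as ``injectivity of $\mu$''. The problem is condition \enuref{dhp-subt}: the step you yourself flag as ``the main obstacle'', namely reflecting an embedding $\vec{x}.u_i\theta \subtEeq \vec{x}.u_j\theta$ back through $\theta$, is precisely where the entire content of the lemma lies, and you do not prove it. Worse, the two-case split you propose is not exhaustive. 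The critical situation is a \emph{root} match in which the head of $u_i$ and/or $u_j$ is a substituted variable: if $\hd{\vec[m]{y}.u_j} = y_{j'}$, then after substitution the alleged expanded subterm has head $\hd{\vec{x}.s_{j'}}$ and its leading arguments are a mixture of the leading arguments of $s_{j'}$ and the $\theta$-instances of the arguments of $u_j$. Such an occurrence is neither ``at a position of $u_i$ whose head is a function symbol or trailing binder'' (so injectivity cannot reflect it) nor absorbed into a single inserted copy of some $s_a$ (so \lemref{subtEclosedUnderSubst} plus \lemref{subtEtrans} does not apply); your dichotomy silently skips it. In addition, your sub-case $a = b$ is wrong as stated: tracing the embedding into the same copy yields only $\vec{x}.s_a \subtEeq \vec{x}.s_a$, which is no contradiction since $\subtEeq$ admits equality, and the claimed collapse to $u_i\theta = u_j\theta$ does not follow.

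The paper closes exactly these holes by not attempting a general reflection principle at all: it proves \enuref{dhp-subt} directly by induction on $\vec{x}.u_i\mu$ with a four-way case analysis on the heads $h_1$ of $u_i$ and $h_2$ of $u_j$, each being either in $\xF$ or in $\SEt{\vec[m]{y}}$. Non-root occurrences are dispatched by the induction hypothesis; root matches where the heads are treated alike (both in $\xF$, or both the same $y_{i'}$) are refuted by $\vec[m]{y}.u_i \nsubtEeq \vec[m]{y}.u_j$ together with injectivity of $\mu$; and the mixed cases are refuted by the idea your sketch is missing: a root match forces the head of the relevant $s_{j'}$ to be a function symbol, hence some leading argument ${s_{j'}}_o$ of $s_{j'}$ has free variables and would have to equal $\vec{x}.v_o\mu$, which yields $\vec{x}.{s_{j'}}_o \subtEeq \vec{x}.s_{o'}$ for some $o'$ and therefore $\vec{x}.s_{j'} \subtE \vec{x}.s_{o'}$ --- contradicting that $\vec[m]{\vec{x}.s}$ is a DHP var-arg list when $j' \neq o'$, and irreflexivity of $\subtE$ when $j' = o'$ (the case of two distinct substituted head variables is handled similarly, producing $\vec{x}.s_{i'} \subtEeq \vec{x}.s_{j'}$ directly). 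Without this argument, or an equivalent replacement for it, your proposal is a plan rather than a proof.
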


\begin{proof}
Let $\mu = \SEt{\vec[m]{y} \mapsto \vec[m]{s}}$. Without loss
of generality, $\SEt{\vec[l]{z}}$ is fresh for $\mu$.
Consider a term
$\vec{x},\vec[l]{z}.u_i\mu$ where $1 \leq i \leq k$. By definition,
$\varnothing \neq \fv{u_i} \subseteq \SEt{\vec[m]{y}} \cup \SEt{\vec[l]{z}}$ and
$\varnothing \neq \fv{s_l} \subseteq \SEt{\vec{x}}$ for all
$1 \leq l \leq m$. Hence, we have
$\varnothing \neq \fv{u_i\mu} \subseteq \SEt{\vec{x}} \cup \SEt{\vec[l]{z}}$.
Furthermore, if $\vec[m]{y},\vec[l]{z}.u_i$ is an expanded term then
$\vec{x},\vec[l]{z}.u_i\mu$ is an expanded term as substitution application is
capture-avoiding. Hence, we are left to show that for any term
$\vec{x},\vec[l]{z}.u_j\mu$ with $i \neq j$ we have
$\vec{x},\vec[l]{z}.u_i\mu \nsubtEeq \vec{x},\vec[l]{z}.u_j\mu$. We perform induction on
$\vec{x},\vec[l]{z}.u_i\mu$. By definition,
$\vec{x},\vec[l]{z}.u_i = \vec{x},\vec[l]{z},\vec[l_1]{w}.h_1(\vec[c_1]{v},\vec[l_1]{w}\cf)$ and
$\vec{x},\vec[l]{z}.u_j = \vec{x},\vec[l]{z},\vec[l_2]{w}.h_2(\vec[c_2]{t},\vec[l_2]{w}\cf)$.
Let $l_3 = \m{max}(l_1,l_2)$.
Without loss of generality,
$\SEt{\vec[l_3]{w}}$ is fresh for $\mu$ and
$\SEt{\vec[l_3]{w}} \cap \SEt{\vec[l]{z}} = \varnothing$.
We proceed by case analysis on $h_1$.
\begin{itemize}
\item
If $h_1 \in \xF \cup \SEt{\vec[l]{z}}$ then $\vec{x},\vec[l]{z}.u_i\mu =
\vec{x},\vec[l]{z},\vec[l_1]{w}.h_1(\vec[c_1]{v}\mu,\vec[l_1]{w}\cf)$. By
the induction hypothesis,
$\vec{x},\vec[l]{z},\vec[l_1]{w}.v \nsubtEeq \vec{x},\vec[l]{z}.u_j\mu$ for all
$v \in \SET{\vec[c_1]{v}\mu,\vec[l_1]{w}\cf}$.
\begin{itemize}
\smallskip
\item
If also $h_2 \in \xF \cup \SEt{\vec[l]{z}}$ then $\vec{x},\vec[l]{z}.u_j\mu =
\vec{x},\vec[l]{z},\vec[l_2]{w}.h_2(\vec[c_2]{t}\mu,\vec[l_2]{w}\cf)$.
Clearly, $\vec{x},\vec[l]{z}.u_i\mu \nsubtEeq \vec{x},\vec[l]{z}.u_j\mu$ if
$c_2 > c_1$ or $h_1 \neq h_2$. Hence, we may assume that
$c_2 \leq c_1$ as well as $h_1 = h_2$ and are left to prove that
$\vec{x},\vec[l]{z}.v_i\mu \neq \vec{x},\vec[l]{z}.t_i\mu$ for some
$1 \leq i \leq c_2$. From $\vec{x},\vec[l]{z}.u_i \nsubtEeq \vec{x},\vec[l]{z}.u_j$ we
obtain $\vec{x},\vec[l]{z}.v_i \neq \vec{x},\vec[l]{z}.t_i$ for some
$1 \leq i \leq c_2$, so the desired result follows from
injectivity of $\mu$.
\item
Let $h_2 = y_{j'}$ where $1 \leq j' \leq m$ and
$\vec{x}.s_{j'} =
\vec{x},\vec[c_2+l_2]{w}.h_{j'}(\vec[d]{{s_{j'}}},\vec[c_2+l_2]{w}\cf)$.
Then,
$\vec{x},\vec[l]{z}.u_j\mu = \vec{x},\vec[l]{z},\vec[l_2]{w}.h_{j'}(\vec[d]{{s_{j'}}},
\vec[c_2]{t}\mu,\vec[l_2]{w}\cf)$.
Clearly, $\vec{x},\vec[l]{z}.u_i\mu \nsubtEeq \vec{x},\vec[l]{z}.u_j\mu$ if
$d+c_2 > c_1$ or $h_1 \neq h_{j'}$. Hence, we may assume that
$d+c_2 \leq c_1$ as well as $h_1 = h_{j'}$ and are left to prove
that the two terms differ in at least one of the first $d+c_2$
arguments of the head symbol. For a proof by contradiction,
assume $\vec{x}.v_i\mu = \vec{x}.{s_{j'}}_i$ for all
$1 \leq i \leq d$. Since $\SEt{\vec[l]{z}}$ is fresh for $\mu$,
$h_1 = h_{j'} \in \xF$, so we know that there
is at least some $1 \leq o \leq d$ such that
$\varnothing \neq \fv{{s_{j'}}_o}$. This implies that
$\vec{x}.{s_{j'}}_o = \vec{x}.v_o\mu \subtEeq \vec{x}.s_{o'}$
for some $1 \leq o' \leq m$. Together with
$\vec{x}.s_{j'} \subt \vec{x}.{s_{j'}}_o$ we obtain
$\vec{x}.s_{j'} \subtE \vec{x}.s_{o'}$ which results in a
contradiction with our assumptions regardless of the
relationship between $j'$ and $o'$.
\end{itemize}
\item
If $h_1 = y_{i'}$ for some
$1 \leq i' \leq m$ then $\vec{x}.s_{i'} =
\vec{x},\vec[c_1+l_1]{w}.h_{i'}(\vec[d]{{s_{i'}}},\vec[c_1+l_1]{w}\cf)$
and
$\vec{x},\vec[l]{z}.u_i\mu = \vec{x},\vec[l]{z},\vec[l_1]{w}.h_{i'}(\vec[d]{{s_{i'}}},
\vec[c_1]{v}\mu,\vec[l_1]{w}\cf)$.
By the induction hypothesis,
$\vec{x},\vec[l]{z},\vec[l_1]{w}.v \nsubtEeq \vec{x},\vec[l]{z}.u_j\mu$ for all
$v \in \SET{\vec[d]{{s_{i'}}},\vec[c_1]{v}\mu,\vec[l_1]{w}\cf}$.
\begin{itemize}
\smallskip
\item
If $h_2 \in \xF \cup \SEt{\vec[l]{z}}$ then
$\vec{x},\vec[l]{z}.u_j\mu =
\vec{x},\vec[l]{z},\vec[l_2]{w}.h_2(\vec[c_2]{t}\mu,\vec[l_2]{w}\cf)$. Clearly,
$\vec{x},\vec[l]{z}.u_i\mu \nsubtEeq \vec{x},\vec[l]{z}.u_j\mu$ if $c_2 > d+c_1$ or
$h_{i'} \neq h_2$. Hence, we may assume that $c_2 \leq d+c_1$ as
well as $h_{i'} = h_2$ and are left to prove that the two terms
differ in at least one of the first $c_2$ arguments of the
head symbol. Since $\SEt{\vec[l]{z}}$ is fresh for $\mu$,
$h_{i'} = h_2 \in \xF$, so we know
that there is some minimal $1 \leq o \leq d$ such
that $\varnothing \neq \fv{{s_{i'}}_o}$.
First assume $o \leq c_2$. If
$\vec{x}.{s_{i'}}_o \neq \vec{x}.t_o\mu$, we are done immediately.
Otherwise, $\vec{x}.{s_{i'}}_o = \vec{x}.t_o\mu \subtEeq \vec{x}.s_{o'}$
for some $1 \leq o' \leq m$. Together with
$\vec{x}.s_{i'} \subt \vec{x}.{s_{i'}}_o$ we obtain
$\vec{x}.s_{i'} \subtE \vec{x}.s_{o'}$ which results in a
contradiction with our assumptions regardless of the
relationship between $i'$ and $o'$.
Now assume $o > c_2$. For a proof by contradiction,
assume $\vec[c_2]{\vec{x}.s_{i'}} = \vec[c_2]{\vec{x},\vec[l]{z}.t\mu}$.
By minimality of $o$ together with the definition of $\mu$, this means that
$\fv{\vec[c_2]{t}} = \varnothing$ and therefore also
$\fv{u_j} = \varnothing$ which contradicts our assumption that
$\vec[k]{\vec[m]{y},\vec[l]{z}.u}$ is a DHP var-arg list.
\item Let $h_2 = y_{j'}$ where $1 \leq j' \leq m$ and
$\vec{x}.s_{j'} =
\vec{x},\vec[c_2+l_2]{w}.h_{j'}(\vec[e]{{s_{j'}}},\vec[c_2+l_2]{w}\cf)$.
Then,
$\vec{x},\vec[l]{z}.u_j\mu = \vec{x},\vec[l]{z},\vec[l_2]{w}.h_{j'}(\vec[e]{{s_{j'}}},
\vec[c_2]{t}\mu,\vec[l_2]{w}\cf)$.
Clearly, $\vec{x},\vec[l]{z}.u_i\mu \nsubtEeq \vec{x},\vec[l]{z}.u_j\mu$ if
$e+c_2 > d+c_1$ or $h_{i'} \neq h_{j'}$. Hence, we may assume
that $e+c_2 \leq d+c_1$ as well as $h_{i'} = h_{j'}$ and are
left to prove that the two terms differ in at least one of the
first $e+c_2$ arguments of the head symbol. If $i' = j'$ then
$d=e$ and thus $c_2 \leq c_1$. From
$\vec{x},\vec[l]{z}.u_i \nsubtEeq \vec{x},\vec[l]{z}.u_j$ we obtain
$\vec{x},\vec[l]{z}.v_i \neq \vec{x},\vec[l]{z}.t_i$ for some $1 \leq i \leq c_2$, so
the desired result follows from injectivity of $\mu$. Now
consider the case $i' \neq j'$. For a proof by contradiction,
assume $\vec{x}.{s_{i'}}_i = \vec{x}.{s_{j'}}_i$ for all
$1 \leq i \leq \m{min}(d,e)$. In particular, this implies $d \neq e$.
If $d > e$ then
$\vec{x}.s_{i'} \subtEeq \vec{x}.s_{j'}$ and if $e > d$ then
$\vec{x}.s_{j'} \subtEeq \vec{x}.s_{i'}$. Hence, we arrive at a
contradiction with our initial assumptions. \qed
\end{itemize}
\end{itemize}
\end{proof}

\begin{lemma}
\label{lem:dhpsubstrec}
If $\vec[m]{y}.u\in \dhp$ then
$\vec{x}.u\SEt{\vec[m]{y} \mapsto \vec[m]{s}} \in \dhp$
whenever $\vec[m]{\vec{x}.s}$ is a \textup{DHP} var-arg list.
\end{lemma}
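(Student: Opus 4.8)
The plan is to argue by induction on the size of $\vec[m]{y}.h(\vec[k]{u})$, distinguishing cases on the head $h$ and writing $\mu = \SET{\vec[m]{y} \mapsto \vec[m]{s}}$. For the induction to carry through I prove a slightly more general statement in which the leading binders and the var-arg list are allowed to grow, since descending into an argument produces inner binders $\vec{w}$ that do not belong to $\dom{\mu}$. The enabling observation is that a DHP var-arg list $\vec[m]{\vec{x}.s}$ can always be extended by such binders: the list $(\vec{x},\vec{w}).s_1,\dots,(\vec{x},\vec{w}).s_m$ together with one element $(\vec{x},\vec{w}).w_l\cf$ for each $w_l$ in $\vec{w}$ is again a DHP var-arg list. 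Indeed, conditions \enuref{dhp-var} and \enuref{dhp-exp} are immediate---expandedness constrains only the body of a term and is insensitive to the extra outer binders $\vec{w}$, and each $w_l\cf$ is trivially expanded---while for \enuref{dhp-subt} the comparisons among the $s_i$ are inherited from $\vec[m]{\vec{x}.s}$, those among the $w_l\cf$ hold as the $w_l$ are distinct, and the cross comparisons fail because no $s_i$ shares a free variable with any $w_l$. I also use repeatedly that every subterm of a DHP is a DHP, its free-variable subterms forming a subset of those of the whole term.

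If $h \in \xF$ then $h(\vec{u})\mu = h(\vec{u}\mu)$ and, as $h$ is not a free variable, it suffices to show that each argument is a DHP. Writing $u_j = \vec{w}.u_j'$, I invoke the induction hypothesis on the strictly smaller subterm $\vec{y},\vec{w}.u_j'$ with $\vec[m]{\vec{x}.s}$ extended by $\vec{w}$ as in the first paragraph, obtaining $\vec{x},\vec{w}.u_j'\mu \in \dhp$. If $h$ is a free variable, then by \defref{dhp} the arguments $\vec[k]{\vec{y}.u}$ form a DHP var-arg list and, since $\fv{u_j} \subseteq \SEt{\vec{y}}$, neither $u_j$ nor $u_j\mu$ contains any free-variable subterm; hence \lemref{dhpVarArgListSubst} shows that the arguments after applying $\mu$ again form a DHP var-arg list, so the sole free-variable subterm of $\vec{x}.h(\vec{u})\mu$ is the term itself and it meets conditions \enuref{dhp-var}--\enuref{dhp-subt}.

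The essential case is $h = y_i$, the projection situation that actually occurs when the lemma is applied to projection bindings. Its image $s_i$ is expanded by \defref{dhp}\enuref{dhp-exp}, so write $s_i = \vec{z}.h'(\vec{v},\vec{z}\cf)$ with $h' \in \xF \cup \SEt{\vec{x}}$ and $\fv{\vec{v}} \subseteq \SEt{\vec{x}}$. Unfolding substitution in $\beta\eta$-long normal form yields $y_i(\vec{u})\mu = h'(\vec{v},\vec{u}\mu)$. Since $h'$ is again not free, I only need each argument to be a DHP: the components of $\vec{v}$ have no free-variable subterm because their free variables lie in $\SEt{\vec{x}}$, and each $u_j\mu$ is treated by the induction hypothesis exactly as in the function-symbol case.

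I expect this last case to be the main obstacle, for two reasons. First, the rewriting $y_i(\vec{u})\mu = h'(\vec{v},\vec{u}\mu)$ must be justified carefully from the expanded shape of $s_i$ and the behaviour of $\eta$-long normal forms under substitution; what makes it harmless for the DHP property is that the inserted spine $h'(\vec{v},-)$ never lies beneath a free variable, so the local restriction \enuref{dhp-subt} is not endangered at this level. Second, the binder bookkeeping is delicate, because descending through function symbols and projections keeps enlarging the set of inner binders outside $\dom{\mu}$; the whole argument only closes because the extended var-arg list from the first paragraph is itself a legitimate DHP var-arg list, which is precisely what lets \lemref{dhpVarArgListSubst} be applied uniformly at the free-variable leaves of the term.
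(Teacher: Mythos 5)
Your proof is correct and follows essentially the same route as the paper's: structural induction with a case analysis on the head $h$, concluding directly from the induction hypothesis when $h \in \xF$, using expandedness of $s_i$ (so that the projected term keeps a non-variable head $h'$ with variable-free arguments $\vec{v}$) when $h = y_i$, and invoking \lemref{dhpVarArgListSubst} when $h \in \xV \setminus \SEt{\vec[m]{y}}$. The only difference is that you make explicit the binder bookkeeping — strengthening the statement so that the var-arg list may be extended by $\eta$-expanded inner binders $w_l\cf$ — which the paper's bare appeal to the induction hypothesis for $\vec{x}.u_i\mu$ leaves implicit; this is a careful refinement of the same argument, not a different one.
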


\begin{proof}
Let $\mu = \SEt{\vec[m]{y} \mapsto \vec[m]{s}}$. We proceed by
induction on $\vec[m]{y}.u$, so consider
$\vec[m]{y}.u \subteq \vec[m]{y},\vec[l]{z}.h(\vec[k]{u})$. Without loss
of generality, $\SEt{\vec[l]{z}}$ is fresh for $\mu$.
By the induction
hypothesis, $\vec{x},\vec[l]{z}.u_i\mu \in \dhp$ for all $1 \leq i \leq k$. If
$h \in \xF \cup \SEt{\vec[l]{z}}$ then we immediately conclude that
$\vec{x},\vec[l]{z}.h(\vec[k]{u})\mu \in \dhp$. Now consider the case where
$h = y_j$ for some $1 \leq j \leq m$. Note that
$\vec{x}.s_j = \vec{x},\vec[k]{w}.h(\vec[l]{v},\vec[k]{w}\cf)$ is an
expanded term which does not contain free variables. Hence,
together with the induction hypothesis we obtain
$\vec{x},\vec[l]{z}.y_j(\vec[k]{u})\mu =
\vec{x},\vec[l]{z}.h(\vec[l]{v},\vec[k]{u}\mu)
\in \dhp$.
Finally, let
$h \in \xV \setminus (\SEt{\vec[m]{y}} \cup \SEt{\vec[l]{z}})$.
In particular, $h \not\in \dom{\mu}$ and
$\vec[k]{\vec[m]{y},\vec[l]{z}.u}$ is a
DHP var-arg list. Hence, the result follows from
\lemref{dhpVarArgListSubst}. \qed
\end{proof}

\begin{lemma}
\label{lem:dhpsubst}
If $s \in \dhp$ and $\theta$ is a substitution such that
$\im{\theta} \subseteq \dhp$ then $s\theta \in \dhp$.
\end{lemma}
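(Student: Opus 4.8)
The plan is to prove $s\theta \in \dhp$ by structural induction on $s = \vec{x}.h(\vec[m]{s})$, splitting on whether the head $h$ is a free variable. Before starting I would record one easy observation: every subterm of a DHP is again a DHP. This is immediate from \defref{dhp}, since conditions \enuref{dhp-var}--\enuref{dhp-subt} quantify over \emph{all} free-variable-headed subterms and the subterm relation accumulates binders from the root, so the binder context of a subterm is the same whether it is viewed inside $s$ or on its own. I need this fact only to feed the induction hypothesis in the non-free-variable case.

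The inductive step where $h \in \xF \cup \SEt{\vec{x}}$ is routine: here $h \notin \dom{\theta}$, so $s\theta = \vec{x}.h(\vec[m]{s}\theta)$, each $\vec{x}.s_i$ is a DHP by the observation above, and the induction hypothesis gives $\vec{x}.s_i\theta \in \dhp$. Since the head of $s\theta$ is not a free variable, the DHP conditions on $s\theta$ reduce to conditions on its arguments, so $s\theta \in \dhp$. The interesting case is $h = F \in \xV \setminus \SEt{\vec{x}}$, where $s$ is itself a free-variable-headed DHP and hence $\vec[m]{\vec{x}.s}$ is a DHP var-arg list. The key point is that condition \enuref{dhp-var} forces $\fv{s_i} \subseteq \SEt{\vec{x}}$, so the arguments contain no genuinely free variables and $\theta$ acts as the identity on them, i.e.\ $\vec[m]{s}\theta = \vec[m]{s}$. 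If $F \notin \dom{\theta}$ this already gives $s\theta = s \in \dhp$; otherwise $\theta(F) = \vec[m]{y}.u \in \im{\theta} \subseteq \dhp$ and the hereditary definition of substitution application collapses $s\theta$ to $\vec{x}.u\SET{\vec[m]{y} \mapsto \vec[m]{s}}$. An application of \lemref{dhpsubstrec} to the DHP $\vec[m]{y}.u$ and the DHP var-arg list $\vec[m]{\vec{x}.s}$ then yields $s\theta \in \dhp$, closing this case.

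I do not expect the outer argument to be the obstacle. The genuine difficulty of the lemma---showing that the local restriction \enuref{dhp-subt} survives substitution into the arguments of a free variable---has already been absorbed into \lemref{dhpVarArgListSubst} and \lemref{dhpsubstrec}, whose proofs carry out the delicate $\subtEeq$ bookkeeping. What remains for the present lemma is essentially a case split that routes the free-variable head to \lemref{dhpsubstrec} and everything else to the induction hypothesis. The only points that demand care are the two reductions the case split relies on: verifying that arguments of free variables are free-variable-free (so $\theta$ leaves them fixed) and that $F(\vec[m]{s})\theta$ collapses exactly to $\vec{x}.u\SET{\vec[m]{y} \mapsto \vec[m]{s}}$ once the capture-avoiding renaming built into the definition of $t\theta$ is accounted for. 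Both are straightforward consequences of the definitions rather than real obstacles.
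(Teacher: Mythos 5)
Your proof is correct and follows essentially the same route as the paper's: structural induction on $s$, the routine case for heads in $\xF \cup \SEt{\vec{x}}$ handled by the induction hypothesis, and the free-variable-head case reduced—via the observation that the arguments are closed under the binder so $\theta$ fixes them—to an application of \lemref{dhpsubstrec}. The only cosmetic difference is that you state explicitly the fact that subterms of DHPs are DHPs, which the paper uses implicitly when invoking the induction hypothesis.
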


\begin{proof}
We proceed by induction on $s$, so let $s = \vec{x}.h(\vec[m]{s})$.
By the induction hypothesis, $(\vec{x}.s_i)\theta \in \dhp$ for
$1 \leq i \leq m$. If $h \in \xF \cup \SEt{\vec{x}}$ we immediately
conclude $s\theta \in \dhp$. Otherwise,
$h = x \in \xV \setminus \SEt{\vec{x}}$. Since $s \in \dhp$,
$\fv{\vec{x}.s_i} = \varnothing$ and therefore
$(\vec{x}.s_i)\theta = \vec{x}.s_i$ for all $1 \leq i \leq m$.
Hence, if $x \not\in \dom{\theta}$ then $s\theta \in \dhp$. Now
assume $\theta(x) = \vec[m]{y}.u \in \dhp$.
Without loss of generality we can choose
$\vec{x}$ and $\vec[m]{y}$ such that
$\SEt{\vec{x}}$ is fresh for $\theta$
and $\SEt{\vec{x}} \cap \SEt{\vec[m]{y}} = \varnothing$, so
$(\vec{x}.x(\vec[m]{s}))\theta = \vec{x}.u\SET{\vec[m]{y} \mapsto
\vec[m]{s}\theta} = \vec{x}.u\SEt{\vec[m]{y} \mapsto
\vec[m]{s}}$. Hence, \lemref{dhpsubstrec} yields
$(\vec{x}.x(\vec[m]{s}))\theta \in \dhp$ as desired. \qed
\end{proof}

Finally, we can prove \thmref{dhpu}.

\begin{proof}[of \thmref{dhpu}]
We perform case analysis on the inference rule applied in the step.
The result is trivial for \urem and \udec. For the remaining rules,
we can use \lemref{dhpsubst} as the substitutions used in them only
map variables to DHPs: This is true by assumption for \uvar.
Furthermore, partial bindings are DHPs by definition, so \uimt and
\uprj are also covered. Moreover, the term $u$ used in \uffe is
also a DHP. Finally, we have to show that the terms $u$ and $v$ used
in \uffne are DHPs: From $P_1$ and $P_2$, we can see that conditions
\enuref{dhp-var} and \enuref{dhp-exp}
of \defref{dhp} clearly hold for all $\vec{y}.u_i$ and $\vec[m]{z}.v_i$
where $1 \leq i \leq l$. For a proof by contradiction regarding
condition \enuref{dhp-subt}, assume
$\vec{y}.u_i \subtEeq \vec{y}.u_j$ for some $i \neq j$. If both
terms come from $P_2$ then
$\vec[k]{x}.u_i\SEt{\vec{y} \mapsto \vec{s}} = \vec{x}.t_{i'}$ for
some $1 \leq i' \leq m$ and
$\vec[k]{x}.u_j\SEt{\vec{y} \mapsto \vec{s}} = \vec{x}.t_{j'}$ for
some $1 \leq j' \leq m$ where $i' \neq j'$. Together with
\lemref{subtEclosedUnderSubst} we obtain
$\vec[k]{x}.t_{i'} \subtEeq \vec[k]{x}.t_{j'}$ for $i' \neq j'$
which contradicts our assumption $E \subseteq \dhp \times \dhp$. If
$\vec{y}.u_i = \vec{y}.y_{i'}\cf$ for some $i'$ then
$\vec{y}.u_j = \vec{y}.y_{i'}\cf$ is the only possibility.
Hence, we only have to consider one remaining case where
$\vec{y}.u_j = \vec{y}.y_{j'}\cf$ for some $j'$ and
$y_{j'} \in \fv{u_i}$. By definition, there is a
$\vec[k]{x}.t_{l_1}$ such that
$\vec[k]{x}.s_{j'} \subtEeq \vec[k]{x}.t_{l_1}$ ($P_1$ for $u_j$) and a
$\vec[k]{x}.t_{l_2}$ such that
$\vec[k]{x}.t_{l_2} \subtEeq \vec[k]{x}.s_{j'}$ ($P_2$ for $u_i$). From
\lemref{subtEtrans} we obtain
$\vec[k]{x}.t_{l_2} \subtEeq \vec[k]{x}.t_{l_1}$. If
$l_1 \neq l_2$ we arrive at a contradiction with
$E \subseteq \dhp \times \dhp$. Otherwise,
$\vec[k]{x}.t_{l_1} = \vec[k]{x}.t_{l_2}$ and therefore
$\vec[k]{x}.t_{l_2} = \vec[k]{x}.s_{j'} = \vec[k]{x}.t_{l_1}$ which means
that $\vec{y}.u_i = \vec{y}.u_j$,
$\vec[m]{z}.v_i = \vec[m]{z}.v_j$ and therefore $i = j$ as the pairs of
arguments are taken from a set. The proof of condition \enuref{dhp-subt}
for $\vec[l]{\vec[m]{z}.v}$ is analogous. \qed
\end{proof}

\section{Proofs of Lemmata \ref{lem:nondetComplete} and
  \ref{lem:custepTerm}}
\label{app:lemmata}

We start
by establishing a sufficient condition for
steps in $\custep$ will often be used in the
proof of \lemref{nondetComplete}.

\begin{lemma}
\label{lem:ndcAux}
Consider a unification pair $(E,\theta)$, a set of variables $W$
and an idempotent substitution $\delta \in \unifiers{E}$
such that $\theta\delta = \delta\restr{W}$.
Let $\delta' = \delta \uplus \mu$
where $\fv{\mu} \subseteq \fv{\delta}$ and
$\dom{\mu} \subseteq \fv{\gamma}$ is a set of variables not occurring
elsewhere. If
$(E,\theta) \ustep (E'\gamma,\theta\gamma)$, $E' \subseteq E$ and
$\gamma\delta' = \delta\restr{\dom{\gamma}}$ then
$(E,\theta,\delta) \custep (E'\gamma,\theta\gamma,\delta')$.
\end{lemma}

\begin{proof}
We start by proving $\gamma\delta' = \delta \restr{V}$ for all sets
of variables $V$ such that $V \cap \dom{\mu} = \varnothing$
(\textasteriskcentered). Let $x \in V$, so $x \notin \dom{\mu}$. If
$x \in \dom{\gamma}$ then $\gamma(x)\delta' = \delta(x)$ by
assumption. Otherwise, $x \notin \dom{\gamma}$. Since
$x \notin \dom{\mu}$, $\gamma(x)\delta' = \delta(x)$ as desired.
From $\delta \in \unifiers{E}$ we infer $u\delta = v\delta$ for all
$u \approx v \in E$. Together with (\textasteriskcentered) we obtain
$u\gamma\delta' = v\gamma\delta'$ as
$\fv{E} \cap \dom{\mu} = \varnothing$ and therefore
$u'\delta' = v'\delta'$ for all $u' \approx v' \in E'\gamma$ which
means that $\delta' \in \unifiers{E'\gamma}$. Furthermore, we have
$\theta\gamma\delta' = \theta\delta = \delta = \delta'\restr{W}$
where the first equality follows from (\textasteriskcentered)
because $(W \cup \fv{\theta}) \cap \dom{\mu} = \varnothing$.
Finally, idempotence of $\delta'$ follows of idempotence of
$\delta$ together with $\fv{\mu} \subseteq \fv{\delta$} and
$\dom{\mu} \cap \fv{\delta} = \varnothing$.
\qed
\end{proof}

Next comes a result which is needed for the \uffne case in
the proof of \lemref{nondetComplete}.

\begin{lemma}
\label{lem:ffneComplete}
Let $E$ be a set of unordered pairs of \textup{DHPs},
$\vec[k]{x}.F(\vec{s}) \approx \vec[k]{x}.G(\vec[m]{t}) \in E$ and
$\delta \in \unifiers{E}$ such that
$\SET{F \mapsto \vec{y}.p, G \mapsto \vec[m]{z}.q} \subseteq \delta$.
If $\vec[l]{\vec{y}.u}$ and $\vec[l]{\vec[m]{z}.v}$ are defined as in
\textup{\uffne} then there exists a term $\vec[l]{w}.u$ such that
\begin{enumerate}[(i)]
\item \label{enu:ffneCompleteF}
$\vec{y}.u\SEt{\vec[l]{w} \mapsto \vec[l]{u}} = \vec{y}.p$,
\item \label{enu:ffneCompleteG}
$\vec[m]{z}.u\SEt{\vec[l]{w} \mapsto \vec[l]{v}} = \vec[m]{z}.q$.
\end{enumerate}
\end{lemma}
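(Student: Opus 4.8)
The plan is to produce the template $\vec[l]{w}.u$ by a simultaneous recursive traversal of $p$ and $q$. First I would exploit that $\delta$ unifies the pair $\vec[k]{x}.F(\vec{s}) \approx \vec[k]{x}.G(\vec[m]{t})$: since by \defref{dhp}\enuref{dhp-var} the arguments satisfy $\fv{s_i}, \fv{t_j} \subseteq \SEt{\vec[k]{x}}$, the substitution $\delta$ leaves them untouched, so $F(\vec{s})\delta = p\SEt{\vec{y} \mapsto \vec{s}}$ and $G(\vec[m]{t})\delta = q\SEt{\vec[m]{z} \mapsto \vec[m]{t}}$, and the unification equation collapses to the single identity $p\SEt{\vec{y} \mapsto \vec{s}} = q\SEt{\vec[m]{z} \mapsto \vec[m]{t}}$; call this common term $r$. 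I would also record the invariant satisfied by every pair in $P_1 \cup P_2$: unfolding \defref{unif}, a $P_1$-pair $(\vec{y}.y_i\cf,\vec[m]{z}.w)$ satisfies $u_j\SEt{\vec{y} \mapsto \vec{s}} = s_i = w\SEt{\vec[m]{z} \mapsto \vec[m]{t}} = v_j\SEt{\vec[m]{z} \mapsto \vec[m]{t}}$, and symmetrically a $P_2$-pair yields the common value $t_i$. Hence every $w_j$ abstracts a subterm reachable both as $u_j\SEt{\vec{y} \mapsto \vec{s}}$ and as $v_j\SEt{\vec[m]{z} \mapsto \vec[m]{t}}$, which is what keeps the two instantiations of the template consistent.

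I would then construct $u$ by induction on $|r|$, descending jointly along $p$ and $q$ and splitting on the heads. If neither $\hd{p}$ nor $\hd{q}$ lies in $\SEt{\vec{y}}$ respectively $\SEt{\vec[m]{z}}$, then the respective substitutions do not change the head, so both heads equal $\hd{r}$ and therefore coincide; I keep this head for $u$ and recurse on the pairwise aligned arguments, each of which again satisfies the core identity on a smaller subterm of $r$. If $\hd{p} = y_i$, the aligned subterm of $r$ is the one produced by $y_i \mapsto s_i$; since the $q$-side exhibits that same subterm as an instance of $\vec[m]{t}$, there is a $w$ with $\fv{w} \subseteq \SEt{\vec[m]{z}}$ and $s_i = w\SEt{\vec[m]{z} \mapsto \vec[m]{t}}$, so $(\vec{y}.y_i\cf,\vec[m]{z}.w) \in P_1$ and I place the corresponding $w_j\cf$ at this node. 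The symmetric situation $\hd{q} = z_i$ is handled by a $P_2$-pair.

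For correctness I would verify \enuref{ffneCompleteF} and \enuref{ffneCompleteG} case by case. In the structural case both instantiations reproduce the shared head and, by the induction hypothesis, its arguments, so they yield the $p$- respectively $q$-subterm. In the abstraction case $w_j\cf\SEt{\vec[l]{w} \mapsto \vec[l]{u}} = u_j = y_i\cf$ recovers the $p$-side, while $w_j\cf\SEt{\vec[l]{w} \mapsto \vec[l]{v}} = v_j = w$ must recover the $q$-side. The latter is where determinism of DHP matching is essential: the $q$-subterm $Q$ at this node and the chosen $w$ agree after $\SEt{\vec[m]{z} \mapsto \vec[m]{t}}$ (both equal the common subterm of $r$), and since $\vec[m]{t}$ is a DHP var-arg list, \thmref{matchUnique} forces $Q = w = v_j$. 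The same uniqueness also gives finiteness of $P_1 \cup P_2$ and hence of the template, as for each index there is at most one reconstructing $w$.

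The hard part is the variable-head case when the variable is of higher type and occurs applied, say $\hd{p} = y_i$ with a nonempty argument list. Then $y_i\SEt{\vec{y} \mapsto \vec{s}}$ triggers a $\beta$-reduction against $s_i$, so the aligned subterm of $r$ is not literally $s_i$ but an instance of it, and I cannot simply abstract the whole node into one $w_j$. Here I expect to rely on expandedness (\defref{dhp}\enuref{dhp-exp}): because $\vec[k]{x}.s_i$ ends in its trailing bound variables in fully $\eta$-expanded position, the $\beta$-reduction only substitutes arguments into fixed holes, which lets me descend underneath and keep both sides synchronized. Showing that this refined descent still decreases the induction measure, and that the local restriction (\defref{dhp}\enuref{dhp-subt}) rules out two competing abstractions at the same node, is the most delicate bookkeeping; the remaining cases are routine once the core identity and the matching-uniqueness argument are in place.
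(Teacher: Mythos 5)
Your overall strategy coincides with the paper's: establish the core identity $\vec[k]{x}.p\SEt{\vec{y} \mapsto \vec{s}} = \vec[k]{x}.q\SEt{\vec[m]{z} \mapsto \vec[m]{t}}$ from $\delta \in \unifiers{E}$, then build the template by a joint, head-driven descent through $p$ and $q$, inserting abstracted variables from $P_1 \cup P_2$ when a bound variable $y_i$ (resp.\ $z_i$) appears as head (the paper organizes the descent as a simultaneous structural induction on $\vec{y}.p$ and $\vec[m]{z}.q$ rather than on $|r|$). However, there is a genuine gap exactly where you flag one: the case $p = y_i(\vec[o_1]{p})$ with $o_1 > 0$. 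Your construction places $w_j\cf$ at such a node, which is only correct when $o_1 = 0$; for an applied higher-order $y_i$ you say you ``expect to rely on expandedness'' and ``descend underneath,'' but you never specify which term goes into the template there, nor which pair of $P_1$ it corresponds to. Since this case is the crux of the lemma (it is the only place where expandedness of DHP arguments is actually used), deferring it as bookkeeping leaves the proof incomplete.

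The paper's resolution is concrete. Write $q = \vec[m]{z}.h_2(\vec[o_2]{q})$; the core identity forces $o_2 \geq o_1$. Define $r = \vec[m]{z},\vec[o_1]{y}.h_2(\vec[o_2-o_1]{q},\vec[o_1]{y}\cf)$, i.e., strip the trailing $o_1$ arguments from $q$ and re-abstract them. Because $\vec[k]{x}.s_i$ is expanded, the core identity yields exactly $\vec[k]{x}.s_i = \vec[k]{x}.r\SEt{\vec[m]{z} \mapsto \vec[m]{t}}$, so $(\vec{y}.y_i\cf,\vec[m]{z}.r) \in P_1$, say with index $j$, and additionally the aligned equations $\vec[k]{x}.p_i\SEt{\vec{y} \mapsto \vec{s}} = \vec[k]{x}.q_{o_2-o_1+i}\SEt{\vec[m]{z} \mapsto \vec[m]{t}}$ for all $1 \leq i \leq o_1$. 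The induction hypotheses applied to these argument pairs give templates $\vec[o_1]{\vec[l]{w}.r}$, and the node of the template is $w_j(\vec[o_1]{r})$ --- the abstracted variable \emph{applied} to recursively constructed templates, not $w_j\cf$. With this in place your remaining worries dissolve: the recursion is on the structurally smaller pairs $(p_i, q_{o_2-o_1+i})$, so no size measure on $r$ is needed, and since the lemma only asserts \emph{existence} of a template, neither uniqueness of the reconstructing $w$ (your appeal to \thmref{matchUnique}) nor exclusion of ``competing abstractions'' via the local restriction is required anywhere in the argument.
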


\begin{proof}
Since $\delta \in \unifiers{E}$,
$\vec[k]{x}.p\SEt{\vec{y} \mapsto \vec{s}} =
\vec[k]{x}.q\SEt{\vec[m]{z} \mapsto \vec[m]{t}}$
(\textasteriskcentered). We prove that (\textasteriskcentered)
implies \enuref{ffneCompleteF} and \enuref{ffneCompleteG}
simultaneously by induction on $\vec{y}.p$ and $\vec[m]{z}.q$,
respectively. Let $\vec{y}.p = \vec{y}.h_1(\vec[o_1]{p})$ and
$\vec[m]{z}.q = \vec[m]{z}.h_2(\vec[o_2]{q})$. If
$h_1 \in \xF \cup (\xV \setminus \SEt{\vec{y}})$ and
$h_2 \in \xF \cup (\xV \setminus \SEt{\vec[m]{z}})$ then $h_1 = h_2$
and therefore also $o_1 = o_2$ follows from (\textasteriskcentered).
Furthermore, (\textasteriskcentered) yields
$\vec[k]{x}.p_i\SEt{\vec{y} \mapsto \vec{s}} =
\vec[k]{x}.q_i\SEt{\vec[m]{z} \mapsto \vec[m]{t}}$
for all $1 \leq i \leq o_1$. Therefore, we
obtain the existence of $\vec[o_1]{\vec[l]{w}.r}$ such that
$\vec{y}.r_i\SEt{\vec[l]{w} \mapsto \vec[l]{u}} = \vec{y}.p_i$ and
$\vec[m]{z}.r_i\SEt{\vec[l]{w} \mapsto \vec[l]{v}} = \vec[m]{z}.q_i$
for all $1 \leq i \leq o_1$ from the induction hypotheses. Hence,
we define $\vec[l]{w}.u = \vec[l]{w}.h_1(\vec[o_1]{r})$ which
satisfies both \enuref{ffneCompleteF} and \enuref{ffneCompleteG}.
Now consider the case where $h_1 \in y_i$ for some $1 \leq i \leq n$,
so by (\textasteriskcentered) we have $o_2 \geq o_1$. Now let
$r = \vec[m]{z},\vec[o_1]{y}.h_2(\vec[o_2-o_1]{q},\vec[o_1]{y}\cf)$
where we choose $\vec[o_1]{y}$ such that
$(\fv{\vec[o_2-o_1]{q}} \cup \SEt{h_2}) \cap \SEt{\vec[o_1]{y}} =
\varnothing$. From (\textasteriskcentered) and the fact that
$\vec[k]{x}.s_i$ is an expanded term we obtain
$\vec[k]{x}.s_i = \vec[k]{x}.r\SEt{\vec[m]{z} \mapsto \vec[m]{t}}$
as well as
$\vec[k]{x}.p_i\SEt{\vec{y} \mapsto \vec{s}} =
\vec[k]{x}.q_{o_2-o_1+i}\SEt{\vec[m]{z} \mapsto \vec[m]{t}}$ for all
$1 \leq i \leq o_1$. Hence, there is some $j$ such that
$\vec{y}.y_i = \vec{y}.u_j$ and $\vec[m]{z}.r = \vec[m]{z}.v_j$.
Moreover, the induction hypotheses yield the existence of
$\vec[o_1]{\vec[l]{w}.r}$ such that
$\vec{y}.r_i\SEt{\vec[l]{w} \mapsto \vec[l]{u}} = \vec{y}.p_i$ and
$\vec[m]{z}.r_i\SEt{\vec[l]{w} \mapsto \vec[l]{v}} =
\vec[m]{z}.q_{o_2-o_1+i}$ for all $1 \leq i \leq o_1$. We define
$\vec[l]{w}.u = \vec[l]{w}.w_j(\vec[o_1]{r})$ which satisfies both
\enuref{ffneCompleteF} and \enuref{ffneCompleteG}.
The case where $h_2 \in \SEt{\vec[m]{z}}$ is symmetrical. \qed
\end{proof}

Finally, we can prove the two lemmata.

\begin{proof}[of \lemref{nondetComplete}]
Assume $E \neq \varnothing$, so $E = \SET{s \approx t} \uplus E_0$.
If $s = t$ then we can apply \urem to obtain
$(E,\theta,\delta) \custep (E_0,\theta,\delta)$.
Otherwise, let
$s = \vec[k]{x}.h_1(\vec{s})$ and $t = \vec[k]{x}.h_2(\vec[m]{t})$.
Since $\delta \in \unifiers{E}$, $s\delta = t\delta$. Now assume
$h_1 = h_2$, so $n = m$. If
$h_1 = h_2 \in \xF \cup \SEt{\vec[k]{x}}$ then we can apply \udec
to obtain
$(E,\theta,\delta) \custep
 (E_0 \cup \bigcup_{1 \leq i \leq n}\SEt{\vec[k]{x}.s_i \approx \vec[k]{x}.t_i},\theta,\delta)$.
Otherwise,
$h_1 = h_2 \in \xV \setminus \SEt{\vec[k]{x}}$ and we can apply
\uffe using a fresh variable $H \notin W$, so
$(E,\theta) \ustep (E_0\gamma,\theta\gamma)$ where
$\gamma = \SET{h_1 \mapsto \vec{y}.H(y_{i_1}\cf,\dots,y_{i_l}\cf)}$.
Since $\delta \in \unifiers{E}$, $h_1 \mapsto \vec{y}.u \in \delta$
for some term $\vec{y}.u$ and we define
$\delta' = \delta \uplus \mu$ where
$\mu = \SET{H \mapsto y_{i_1},\dots,y_{i_l}.u}$. Note that
$\fv{u} \cap \SEt{\vec{y}} \subseteq \SEt{y_{i_1},\dots,y_{i_l}}$ since
$\vec[k]{x}.u\SEt{\vec{y} \mapsto \vec{s}} =
\vec[k]{x}.u\SEt{\vec{y} \mapsto \vec{t}}$, so we cannot use $y_j$
such that $\vec[k]{x}.s_j \neq \vec[k]{x}.t_j$ in $u$ as all $y_i$'s
are mapped to expanded terms. Since
$\fv{\mu} \subseteq \fv{\delta}$,
$\dom{\mu} \subseteq \fv{\gamma}$ and
$\gamma(h_1)\delta' = \gamma(h_1)\mu = \vec{y}.u = \delta(h_1)$,
\lemref{ndcAux} yields
$(E,\theta,\delta) \custep (E_0\gamma,\theta\gamma,\delta')$.
Next, assume
$h_1 \neq h_2$. Since $s\delta = t\delta$, at least one of the two
heads has to be a free variable. Without loss of generality, let
$h_1 \in \xV \setminus \SEt{\vec[k]{x}}$. If $h_1 \notin \fv{t}$
then we can apply \uvar to obtain
$(E,\theta) \ustep (E_0\gamma,\theta\gamma)$ where
$\gamma = \SET{h_1 \mapsto t}$ and set $\delta' = \delta$, so
$\delta'$ is also idempotent. Since $\delta \in \unifiers{E}$,
$\gamma = \delta\restr{\dom{\gamma}}$. Together with idempotence of
$\delta'$ we obtain $\gamma\delta' = \delta\restr{\dom{\gamma}}$.
Thus, an application of \lemref{ndcAux} for $\mu = \varnothing$
yields $(E,\theta,\delta) \custep (E_0\gamma,\theta\gamma,\delta')$.
Now assume
$h_2 \in \xF \cup \SEt{\vec[k]{x}}$. Since $s\delta = t\delta$ we
have $h_1 \mapsto \vec{y}.h_3(\vec[o]{v}) \in \delta$ where
$h_3 \in \xF \cup \SEt{\vec{y}}$ as either $h_3 = h_2 \in \xF$ or
$h_3 = y_i$ and $h_2 = \hd{\vec[k]{x}.s_i}$. Hence, for the fresh
variables $\seq[o]{H} \notin W$ we have
$u = \vec{y}.h_3(\vec[k_1]{z^1}.H_1(\vec{y}\cf,\vec[k_1]{z^1}\cf),
\dots,\vec[k_o]{z^o}.H_o(\vec{y}\cf,\vec[k_o]{z^o}\cf))
\in \ib{\tau} \cup \pb{\tau}$, so we can apply \uimt or \uprj to
obtain $(E,\theta) \ustep (E\gamma,\theta\gamma)$ where
$\gamma = \SET{h_1 \mapsto u}$. We define
$\delta' = \delta \uplus \mu$ where
$\mu = \SET{H_i \mapsto \vec{y}.v_i \mid 1 \leq i \leq o}$.
By the type of $h_3$, we
have $\vec{y}.v_i = \vec{y},\vec[k_i]{z^i}.v_i'$ for all
$1 \leq i \leq o$. Hence,
$(\vec{y},\vec[k_i]{z^i}.H_i(\vec{y}\cf,\vec[k_i]{z^i}\cf))\mu = \vec{y}.v_i$ for
all $1 \leq i \leq o$ and therefore
$\gamma(h_1)\delta' = \gamma(h_1)\mu = (\vec{y}.u)\mu =
\vec{y}.h_3(\vec[o]{v}) = \delta(h_1)$. Since also
$\fv{\mu} \subseteq \fv{\delta}$ and $\dom{\mu} \subseteq \fv{\gamma}$,
we can use \lemref{ndcAux} to obtain
$(E,\theta,\delta) \custep (E_0\gamma,\theta\gamma,\delta')$.
Finally, we are
left with the case where $h_2 \in \xV \setminus \SEt{\vec[k]{x}}$,
so we can apply \uffne using a fresh variable $H \notin W$, and
$(E,\theta) \ustep (E_0\gamma,\theta\gamma)$ where
$\gamma = \SET{h_1 \mapsto \vec{y}.H(\vec[l]{u}),
h_2 \mapsto \vec[m]{z}.H(\vec[l]{v})}$. Since $\delta \in \unifiers{E}$,
$\SET{h_1 \mapsto \vec{y}.p, h_2 \mapsto \vec[m]{z}.q} \subseteq
\delta$ for some terms $ \vec{y}.p$ and
$\vec[m]{z}.q$. \lemref{ffneComplete} yields a term $\vec[l]{w}.u$
such that $\fv{\vec[l]{w}.u} \subseteq \fv{\delta}$,
$\vec{y}.u\SEt{\vec[l]{w} \mapsto \vec[l]{u}} = \vec{y}.p$ and
$\vec[m]{z}.u\SEt{\vec[l]{w} \mapsto \vec[l]{v}} = \vec[m]{z}.q$, so
we define $\delta' = \delta \uplus \mu$ where
$\mu = \SET{H \mapsto \vec[l]{w}.u}$. By definition,
$\gamma(h_1)\delta' = \gamma(h_1)\mu = \delta(h_1)$ and
$\gamma(h_2)\delta' = \gamma(h_2)\mu = \delta(h_2)$.
Since also
$\fv{\mu} \subseteq \fv{\delta}$ and $\dom{\mu} \subseteq \fv{\gamma}$,
we can use \lemref{ndcAux} to obtain
$(E,\theta,\delta) \custep (E_0\gamma,\theta\gamma,\delta')$.
\qed
\end{proof}

\begin{proof}[of \lemref{custepTerm}]
Consider the measure $\m{m}(E,\theta,\delta) = (m,n)$ where
\begin{align*}
m &= \sum\SET{|\delta(x)| \mid x \in \fv{E}} &
n &= \sum\SET{|s| + |t| \mid s \approx t \in E}
\end{align*}
If $(E,\theta,\delta) \custep (E',\theta',\delta')$ then
$\m{m}(E,\theta,\delta) = (m,n) >_\lex (m',n') =
\m{m}(E',\theta',\delta')$: If the underlying unification inference
rule was \urem or \udec, $m \geq m'$ and $n > n'$. For \uvar we have
$m > m'$ because the variable does not occur free in $E'$ anymore.
Furthermore, in the case of \uimt and \uprj we have $m > m'$
because the sum
of the size of terms to which the newly introduced variables are
mapped is smaller than the size of the term to which the processed
variable was mapped
($|\vec{y}.h_3(\vec[o]{v})| > \sum\SET{|\vec{y}.v_i| \mid 1 \leq i \leq o}$
using the notation from the proof of \lemref{nondetComplete}).
For
\uffe, $m = m'$ but $n > n'$. For \uffne we have $m > m'$
because we remove two free variables and add only one new variable
which is mapped to a term which is not bigger than the old mappings
($\fv{E'} = \fv{E} \cup \SET{H} \setminus \SEt{h_1,h_2}$ and
$|\vec{y}.p| \geq |\vec[l]{w}.u|$ using the notation from the proof
of \lemref{nondetComplete}). \qed
\end{proof}

\end{document}